\tikzstyle{vertex}=[circle, draw, inner sep=0pt, minimum size=6pt]
\newtheorem{theorem}{Theorem}[section]
\newtheorem{theorem*}{Theorem}
\newtheorem{corollary}[theorem]{Corollary}
\newtheorem{lemma}[theorem]{Lemma}
\newtheorem{proposition}[theorem]{Proposition}
\newtheorem{fact}[theorem]{Fact}
\theoremstyle{definition}
\newtheorem{definition}[theorem]{Definition}
\theoremstyle{remark}
\newtheorem{remark}[theorem]{Remark}
\numberwithin{equation}{section}
\renewcommand{\vec}[1]{\mathbf{#1}}
\newcommand{\SL}{\mathrm{SL}}
\newcommand{\F}{\mathbb{F}}
\newcommand{\K}{\mathbb{K}}
\newcommand{\ELL}{\mathbb{L}}
\newcommand{\Z}{\mathbb{Z}}
\newcommand{\Q}{\mathbb{Q}}
\newcommand{\C}{\mathbb{C}}
\newcommand{\N}{\mathbb{N}}
\newcommand{\Tr}{\mathrm{Tr}}
\newcommand{\NP}{\mathrm{NP}}
\newcommand{\cB}{\mathcal{B}}
\newcommand{\cA}{\mathcal{A}}
\newcommand{\rk}{\mathrm{rk}}
\newcommand{\crk}{\mathrm{crk}}
\newcommand{\ncrk}{\mathrm{ncrk}}
\newcommand{\cork}{\mathrm{cork}}
\newcommand{\blowup}[2]{{#1}^{[#2]}}
\newcommand{\rblowup}[2]{{#1}^{\{#2\}}}
\newcommand{\trans}[1]{{#1}^{\mathrm{T}}}
\newcommand{\fdchar}{\mathrm{char}}
\newcommand{\nrk}{\mathrm{ncrk}}
\newcommand{\poly}{\mathrm{poly}}
\newcommand{\im}{\mathrm{im}}
\newcommand{\zvec}{\mathbf{0}}
\newcommand{\linspan}{\mathrm{span}}
\title{Constructive non-commutative rank computation is in deterministic                    
polynomial time}
\author{
G\'abor Ivanyos \thanks{ Institute for Computer Science and Control, Hungarian
Academy of Sciences , Budapest, Hungary. {E-mail: \tt Gabor.Ivanyos@sztaki.mta.hu}} 
\and
Youming Qiao  \thanks{Centre for Quantum Software and Information,
University of Technology Sydney, Sydney, Australia.
{E-mail: \tt jimmyqiao86@gmail.com} }\and
K. V. Subrahmanyam \thanks{ Chennai Mathematical Institute, Chennai, India.
{E-mail: \tt kv@cmi.ac.in}}
}
\begin{document}
\maketitle

\begin{abstract}
We extend the  techniques developed in \cite{IQS1}
to obtain a deterministic polynomial-time algorithm for
computing the non-commutative rank
of linear spaces of matrices over any field.

The key new idea that causes a reduction in the time complexity of the algorithm in \cite{IQS1} from exponential time to polynomial time is a reduction procedure that keeps the blow-up 
parameter small,  and there are two methods to implement this idea: the first one is a 
greedy argument that removes certain rows and columns, and the second one is an 
efficient  algorithmic version of a result of Derksen and Makam \cite{DM2}, who were the first to observe that the blow-up parameter can be controlled.
Both methods rely crucially 
on the regularity lemma from \cite{IQS1}. In this note we 
improve that lemma by removing a coprime condition there. 
\end{abstract}




\section{Introduction}\label{sec:intro}

This paper builds on the work reported in our previous paper \cite{IQS1}. In the interest of keeping this paper self contained we introduce the problem again, recall 
its connections to invariant theory and operator theory, and describe recent progress on this problem including our work, \cite{IQS1}, the work of
Garg, Gurvits, Oliviera and Wigderson \cite{GGOW}, and that of Derksen and Makam 
\cite{DM2}. As a result this introduction 
overlaps with the introduction in \cite{IQS1}. Readers who are familiar with \cite{IQS1} can skip straight to ~\ref{sec:progress} where we describe the 
new results in this paper.

Let $X=\{x_1, \dots, x_m\}$ be a set of variables. Given an
$n\times n$ matrix $T$ whose entries are homogeneous linear polynomials from $\Z[X]$, determining the
rank of $T$ over the rational function field $\Q(X)$ is a fundamental open 
problem. This problem, denoted $\rk(T)$, was introduced by J. Edmonds 
\cite{Edm67}.  The decision version of this problem, deciding whether $T$ has rank 
$n$ is known as the Symbolic Determinant Identity 
Testing problem (SDIT). It is natural to consider the problem over any field $\F$. 
If $|\F|$ is constant, this problem was shown to be $\NP$-hard \cite{BFS}. This is 
not the setting we will be concerned with -- we 
will always
assume $|\F|$ to be at least $\Omega(n)$.

When $|\F|\geq 2n$, the Schwartz-Zippel lemma provides a randomized efficient
algorithm for SDIT. Devising a deterministic efficient algorithm for this problem has a long history and
is of fundamental importance in complexity theory. In 2003, Kabanets and 
Impagliazzo \cite{KI04} showed a remarkable connection between deterministic 
efficient algorithms for SDIT and 
circuit lower bounds. This endows SDIT with fundamental importance in 
computational complexity, but the problem still remains hugely open. Improving on 
the results in  \cite{KI04},
Carmosino et al \cite{CIKK15} showed that an efficient algorithm for SDIT implies the existence of an 
explicit multilinear polynomial family such that its graph is computable in $\mathrm{NE}$, but the polynomial family cannot be computed by
polynomial-size arithmetic circuits. 

It is also natural to consider this problem in the non-commutative setting. The
\emph{free skew field} is the non-commutative analogue of the rational function
field. We do not define the free skew field in this paper and only point out that the free skew field was first constructed by Amitsur \cite{Ami66}, and
alternative
constructions were given subsequently
by Bergman \cite{Berg70}, Cohn \cite{Cohn}, and Malcolmson \cite{Mal78}.  
We refer the reader to \cite{HW15} by Hrube\v{s} and Wigderson for 
a nice 
introduction to the free skew field from the perspective of algebraic 
computations. 
Cohn's books \cite{Cohn,Cohn95} serve as a comprehensive 
introduction to this topic. 
By the \emph{non-commutative Edmonds problem} we
mean the problem of computing the non-commutative rank of $T$, denoted $\nrk(T)$, and by the \emph{non-commutative                                       
full rank problem} (NCFullRank) we mean the problem of deciding whether
$\nrk(T)$ is full or not. Cohn and Reutenauer \cite{CR99} showed that  NCFullRank is in PSPACE.

In order to talk about further progress on $\nrk(T)$ and NCFullrank we need to describe the various {\em avatars} of 
the non-commutative rank. We give four equivalent formulations of 
the non-commutative rank. We do not give full proofs that these are equivalent formulations since the proofs were already sketched in ~\cite{IQS1}.  
We recall some important definitions from \cite{IQS1} needed to describe these formulations. 

First some notation. Let $M(n, \F)$ denote the
linear space of
$n\times n$ matrices over $\F$. A linear subspace of $M(n, \F)$ is called a
\emph{matrix space}. Given $T$ a matrix of linear forms in variables
$X=\{x_1, \dots, x_m\}$ write $T=x_1B_1+x_2B_2+\dots+x_mB_m$, where
$B_i\in M(n, \F)$.
Let $\cB:=\langle B_1, \dots, B_m\rangle$, where $\langle \cdot \rangle$ denotes
linear span. The rank of $\cB$, denoted as $\rk(\cB)$, is defined as
$\max\{\rk(B)\mid B\in\cB\}$. We call $\cB$ \emph{singular}, if $\rk(\cB)<n$. When
$|\F| > n$, as we will assume throughout,
$\rk(T)=\rk(\cB)$; this is because when the field size is large
enough, the complement of the zero set of a nonzero polynomial is non-empty. 

\paragraph{Shrunk subspaces:}
\begin{definition}
Given $\cB=\langle B_1, \dots, B_m\rangle \leq M(n, \F)$, a subspace $U\leq \F^n$ is called a
\emph{$c$-shrunk subspace} of $\cB$ for $c\in\N$, if there exists $W\leq \F^n$,
such that $\dim(W)\leq \dim(U)-c$ and for every $B\in\cB$, $B(U)\leq W$. $U$ is
called a shrunk subspace of $\cB$, if it is a $c$-shrunk subspace for some $c\in\Z^+$.
\end{definition}

Cohn showed that the non-commutative rank is not full if and only if 
there
is a shrunk subspace \cite{Cohn95}. This was generalized by Fortin and Reutenauer 
\cite[Theorem 1]{FR04}, where the authors showed
$$\nrk(T)=n-\max\{c\in\{0, 1, \dots, n\} \mid \exists c\text{-shrunk subspace of }\cB\}.$$
It follows that the non-commutative rank of the operator $T$ is a property of
the matrix space $\cB$ and does not depend upon its presentation $T$. So it is natural to consider the problem
of determining the maximum $c$ such that $\cB$ has a $c$-shrunk subspace.

\paragraph{Rank decreasing operator:}
When the underlying field $\F$ is the field of complex numbers $\C$, given $B_1,\ldots,B_n$, consider the following positive operator $P$, 
$P: M(n, \C)\to M(n, \C)$, sending $A\to                                                       
\sum_{i\in[m]}B_iAB_i^{\dagger}$.  For $c \in \N$, the operator $P$ is said to be rank
$c$-decreasing if there exists a positive semidefinite matrix $A$ such that
$\rk(A)-\rk(P(A))=c$. Gurvits\cite{Gurvits} considered the problem of determining the maximum $c$ such that $P$ is rank $c$-decreasing.
It can be easily seen that $P$ is rank $c$-decreasing iff $\cB$ has a $c$-shrunk subspace - it was this formulation of the non-commutative rank
which Gurvits was interested, in his attempt to generalize the alternating minimization algorithm of  Linial, Samorodnistky and                                  
Wigderson~\cite{LSW00} for computing the permanent of a matrix. Gurvits proved that his algorithm runs in polynomial time 
when the commutative and non-commutative ranks of $\cB$ coincide.
\paragraph{The null cone for the left right action:}
Shrunk subspaces also appear naturally in a problem of classical invariant theory. 
Consider the action of $\SL(n, \F)\times \SL(n, \F)$ on $M(n, \F)^{\oplus m}$ with $(A,C)$ sending
a tuple $(B_1, \dots, B_m)$ to $(AB_1\trans{C},\ldots,AB_m\trans{C})$.\footnote{This action can
also be written as: $(A, C)$ sending $(B_1, \dots, B_m)$ to $(AB_1C^{-1}, \dots, AB_mC^{-1})$.}   Index the coordinates of the matrices by variables $(x_{i,j}^k)$, $1 \leq k \leq m$, $1 \leq i,j \leq n$.  Let  $R(n,m) \subseteq \F[x_{i,j}^{(k)}]$ be the $\F$-algebra
of polynomials in the variables $x_{i,j}^k$, invariant with respect to this action. In the literature this ring is also called the ring of matrix semi-invariants.
The \emph{nullcone} of $R(n, m)$ is locus of $m$-tuples of matrices where all homogeneous
positive-degree polynomials in $R(n, m)$ vanish. The null-cone is the set of points that need to be discarded when
one constructs the GIT quotient of the action of $SL(n,\C) \times SL(n,\C) $ on $m$-tuples of matrices.
This motivates the question of deciding whether an $m$-tuple $(B_1, \dots, B_m)$ is in the nullcone of
$R(n,m)$. Burgin and Draisma\cite{BD06} and, independently,  Adsul et al \cite{ANK07} showed that 
an $m$-tuple of matrices is in the null cone precisely when $\cB$ has a shrunk subspace. 

It is known that $R(n,m)$ is finitely generated and there is also a good description of the 
homogenous invariant polynomials, which follows from several independent works, 
including Derksen and Weyman \cite{DW00}, Schofield  and Van den Bergh 
\cite{SV01}, Domkos and Zubkov \cite{DZ01}, and Adsul et al \cite{ANK07}. 
Invariants exists only in degrees $nd$, as $d$ runs over all positive integers. To 
obtain invariants of degree $nd$ take matrices $A_1, \dots, A_m \in M(d, \F)$. 
Then $\det(A_1\otimes X_1+\dots+A_m\otimes X_m)$ is a matrix
semi-invariant, and every matrix semi-invariant of degree $nd$ is a linear combination of such
polynomials. Therefore $(B_1, \dots, B_m)$ is in the nullcone if and only if, for 
all $d\in\Z^+$ and all $(A_1, \dots, A_m)\in M(d, \F)^{\oplus m}$, $A_1\otimes B_1
+\dots+A_m\otimes B_m$ is singular. This motivates the following definition and leads us to the last formulation of the non-commutative rank.
\paragraph{Blow-ups:}
\begin{definition}
Given $\cB = \langle B_1, \ldots, B_m\rangle  \leq M(n, \F)$, the $d$th tensor
blow-up of $\cB$ is defined to be $\blowup{\cB}{d}:=M(d, \F)\otimes \cB\leq M(dn, \F)$, the 
linear span of matrices $A_1\otimes B_1
+\dots+A_m\otimes B_m$, with $A_i \in M(d, \F)$. 
\end{definition}
It is clear that $\rk(\blowup{\cB}{d})\geq d\cdot \rk(\cB)$. Furthermore, if $\cB$ 
has no shrunk subspace, then there is some $d$ for which $rk(\blowup{\cB}{d})=nd$; 
this follows from the descriptions of the nullcone and the 
invariants of the left 
right action.
Hence NCFullRank is equivalent to deciding whether $\rk(\blowup{\cB}{d})=nd$ for some $d$. This was also shown by  Hrube\v{s} and Wigderson \cite{HW15}. Hrube\v{s} and Wigderson's interest in knowing whether the 
non-commutative rank of a matrix family is full, was motivated by their study
of non-commutative arithmetic formulas \emph{with divisions}.  
In \cite{IQS1} we showed that when the field size $|\F|$ is large then $d$ divides $\rk(\blowup{\cB}{d})$. 
We refer to this as the regularity lemma, and defer the 
exact statement to a later point (\ref{lem_reg_blowup-window} in \ref{sec:reg}).

So, when $|\F|$ is large, we can define the {\em non-commutative rank} of $\cB$
to be the maximum over $d$ of $\frac{1}{d}$ times the maximum rank of a 
matrix from the blow-up $\rblowup{\cB}{d}$. 

From the last formulation above, an important question is to determine bounds on 
the blow-up parameter $d$ (as a function of $n$) which achieves the desired 
maximum. We define $\sigma(R(n, m))$ to be the smallest $d\in \N$, 
such that those 
non-constant homogeneous invariants of degree $\leq d$ define the nullcone of 
$R(n, m)$.
From the work of 
Derksen \cite{derksen_bound} it follows that $\sigma(R(n, m)) \leq  O(n^4\cdot 
4^{n^2})$, over 
algebraically 
closed fields of characteristic 
zero.\footnote{Derksen's result applies to a wide class of invariant rings.} In 
\cite{IQS1} we showed that $\sigma(R(n, m))\leq 2^{O(n\log n)}$
over large fields 
of arbitrary characteristic. We also gave an algorithm to compute $\ncrk(T)$ and output a witnessing shrunk subspace with running time 
$2^{O(n\log n)}$ over large fields.

We describe this algorithm in the next section. After that we describe further progress on the non-commutative rank from the works of Garg et al~\cite{GGOW}
and Derksen and Makam~\cite{DM2}. We then state the main theorem of the paper. 

\subsection{Outline of the algorithm in \cite{IQS1}}\label{subsec:old_algo}

The algorithm in \cite{IQS1} can be viewed as an analogue of the augmenting path 
algorithm for the bipartite maximum matching problem. However, due to the 
failure of the analogue of Hall's marriage theorem in the matrix space setting, 
there are a couple of new and sophisticated 
components. 

Let us briefly review some features of the augmenting path algorithm. 
Given a matching $T$ 
for the input bipartite graph $G=(L\cup R, E)$, the algorithm tries to find an 
augmenting path for $T$. If an augmenting path is found, $T$ is replaced by a 
larger matching $T'$. If no augmenting paths can be found, the algorithm can 
output a shrunk subset as the certificate of the maximality of $T$. 

We hope to implement the above idea for the non-commutative rank problem. Given 
a matrix $A\in \cB=\linspan(B_1, \dots, B_m)\leq M(n, \F)$, we would like to 
either find an ``augmenting 
path'' for it and increase its rank, or output a $c$-shrunk subspace where 
$c=\cork(A)$. 


A linear algebraic analogue of augmenting paths was developed in 
\cite{conf_version}.
Given a subspace $U\leq \F^n$, let $A^{-1}(U)$ be the 
preimage of $U$ under $A$, namely the subspace $\{v\in \F^n : A(v)\in U\}$. We 
also define $\cB(U):=\linspan(\cup_{i\in [m]} B_i(U))$. Given $A\in \cB\leq M(n, 
\F)$, we apply $\cB$ and $A^{-1}$ iteratively to $V_0=\ker(A)$, to get 
$W_1=\cB(V_0)$, $V_1=A^{-1}(W_1)$, $W_2=\cB(V_1)$, \dots, $V_i=A^{-1}(W_i)$, 
$W_{i+1}=\cB(V_i)$, \dots. It can be shown that for some $\ell\in[n]$, 
$W_1<W_2<\dots<W_\ell=W_{\ell+1}=\dots$. This sequence of subspaces is called \emph{the second Wong 
sequence} of $(A, \cB)$.
\footnote{The first 
Wong sequence is the dual of the 
second one. The sequences are named after Wong who 
defined them in~\cite{Wong} for the special case when $\cB$ is of dimension $1$. Over $\Q$ the 
straightforward 
implementation of the second Wong sequence may lead to a bit size explosion. To 
avoid that some tricks are needed. See \cite{conf_version} for more details.} 
$W_\ell$ is called 
the \emph{limit subspace} of this sequence.  We state as a fact the following important lemma from \cite{conf_version}.

\begin{fact}[{\cite[Lemmas 9 and 10]{conf_version}}]
\label{fact:Wong}
Let $A \in \cB \leq M(n,{\F})$, and let $W^{*}$ be the limit of the second Wong
sequence of
$(A, \cB)$. Then there exists a $\cork(A)$-shrunk subspace of $\cB$ if and only
if $W^{*} \leq  \im(A)$.
\footnote{At the time of
writing the first version of \cite{conf_version}, the authors
were unaware of \cite{FR04} where this  had already appeared.}
If this is the case then $A^{-1}(W^{*})$ is a
$\cork(A)$-shrunk subspace of
$\cB$. In the algebraic RAM model, as well as over ${\mathbb Q}$, we can detect
whether $W^*\subseteq \im(A)$, and in that case we can compute a shrunk subspace
in deterministic polynomial time.
\end{fact}
Therefore 
when $W_\ell\leq \im(A)$, we can conclude that the non-commutative rank is $\rk(A)$. 
On the other hand, when $W_\ell\not\leq \im(A)$, following the bipartite maximum 
matching 
algorithm it seems natural to try to obtain $A'\in \cB$ with $\rk(A')>\rk(A)$. 
However this is not always possible, as it can be the case that 
$\rk(A)=\crk(\cB)$ and $\crk(\cB)<\ncrk(\cB)$. 
But for a matrix space $\cB$ of dimension $2$, $\rk(\cB)=\nrk(\cB)$ for large
enough $\F$; this follows from the Kronecker-Weierstrass theory
of matrix pencils --
alternate proofs may be found in \cite{EH88,PrimitiveI}. 

The key observation in \cite{conf_version} was
that, in certain special cases, when $W_\ell\not \leq \im(A)$ the second
Wong sequence could be used
to find an ``augmenting'' matrix $B$ from $\cB$ such that
$\rk(\mu A+\lambda B)>\rk(A)$ for
some scalars $\lambda$ and $\mu$. This included the case of two-dimensional matrix spaces. 
The authors showed
\begin{fact}[{\cite[Fact 11]{conf_version}}]
\label{fact:dimtwo}
Assume that $|\F|>n$, and let $\cB=\langle A, B\rangle \leq M(n, \F)$. Then
$\rk(A)=\rk(\cB)$ if and only if for any $i\in[n]$, $(\cB                                                 
A^{-1})^i(\zvec)\leq                                                                                      
\im(A)$.
\end{fact}

The key idea in \cite{IQS1}
is to reduce the general problem to the rank two situation. The idea is to find $A'\in \rblowup{\cB}{d}$ of rank 
$\geq (r+1)d$ with some not too large $d$ (so that the scaled-down rank 
$\rk(A')/d$ is larger than $r$), and iterating this procedure.  We give the key steps of that algorithm.

\paragraph{A: Incrementing the scaled-down rank.} This is achieved in two steps.
\begin{itemize}
\item[1] {\sf Incrementing rank:} 
\label{step1}
The first step 
is to obtain a matrix $\widehat{A}\in \rblowup{\cB}{d}$ of rank $\geq 
rd+1$ where $d=r+1$.  To see how this step works, notice first that by multiplying 
$A$ and $\cB$
with an appropriate matrix, one can arrange $A$ to be idempotent. In that
case, as long as $W_1,\ldots,W_{j-1}$ remain inside $\im(A)$,
we have $W_j=\cB^j\ker(A)$. Let $l$ be the smallest index $j$ with
$W_j\not\leq \im(A)$. Then $l \leq r+1$.
Then there exist matrices 
$B_1, B_2,\ldots,B_l$ such that $B_l\cdots B_1\ker(A)\not\leq \im(A)$.
It would be nice if one could find a {\em single} matrix $B\in \cB$
such that $B^l\ker(A)\not\leq  \im(A)$: indeed if this happens then for some 
$\lambda$ and $\mu$ from a subset of the base field of size at least $r+1$ one 
would 
have for $\widehat{A} = \mu A+\lambda B$, $\rk(\widehat{A}) >\rk(A)$. 
This follows from Fact~\ref{fact:dimtwo}.

The main ingredient of
the algorithm in \cite{conf_version} was a method to find such
a $B\in \cB$ in certain special cases. The idea in \cite{IQS1} is
that, if we relax ourselves to work with $\rblowup{\cB}{d}$, then this can be 
achieved for every matrix space $\cB$. 

\item[2] {\sf Rounding up the rank:} 
\label{step2} For the second step, starting with $\widehat{A}$, we wish to get the desired 
$A'\in\rblowup{\cB}{d}$ of rank $\geq (r+1)d$. This is accomplished  in ~\cite{IQS1} by the 
regularity lemma. An efficient, constructive version of this lemma is required in the algorithm. And to accomplish this we need an efficient construction of central division algebras of degree $d^2$ over $\F$ with an explicit matrix representation of such a division algebra.  In ~\cite{IQS1} we were able to 
construct explicit division algebras when the characteristic of $\F$ and $d$ are coprime. 

We reproduce the constructive regularity lemma from ~\cite{IQS1} below. 


{\bf Lemma 5.7} in \cite{IQS1} (Regularity of blow-ups, constructive).
For $\cB\leq M(n, \F)$ and $\cA=\rblowup{\cB}{d}$, assume that
$\mathrm{char}(\F)=0$ or
$\mathrm{char}(\F)\nmid d$, and
$|\F| > (nd)^{\Omega(1)}$. Then,
given a matrix $A\in                                                                               
\cA$ with $\rk A > rd$, there exists a deterministic algorithm that returns
$\widetilde{A}\in \cA$ of rank $\geq (r+1)d$.
This algorithm uses $\poly(nd)$ arithmetic
operations and over $\Q$, all intermediate numbers have bit lengths polynomial in
the input size.
 
 This $A'\in \rblowup{\cB}{d}$ of rank $\geq (r+1)d$ where $d=r+1$ certifies 
that $\ncrk(\cB)\geq r+1$. (From the viewpoint of shrunk subspaces, it is easy to 
see that $\ncrk(\cB)\leq r$ then $\crk(\rblowup{\cB}{d})\leq 
rd$ for any $d$; see e.g. \cite[Proposition 5.2]{IQS1}.) So after these two 
steps we obtain $A'$ of rank $r'd$ where $r'>r$. 

\end{itemize}

\paragraph{B: Iterating over.}
\label{step3} In the next phase, we need 
to use $A'$ and $\rblowup{\cB}{d}$ to restart the above procedure, hoping either 
to 
find a $\cork(A')$-shrunk subspace, or to obtain some $A''$ in 
$\rblowup{\cB}{dd'}$ 
of rank $r''dd'$ where $r''>r'$. We then apply the second Wong sequence to work 
with the blow-up space $\rblowup{\cB}{d}$ and $A'$.\footnote{When the second Wong 
sequence is applied to such blow-up spaces then it has some nice properties; cf. 
the proof for Theorem 5.10 in \cite{IQS1}.} If $\cork(A')$-shrunk subspace $U'$
is found for $\rblowup{\cB}{d}$, then this naturally induces a 
$\cork(A')/d$-shrunk 
subspace $U$ for $\cB$ \cite[Proposition 5.2]{IQS1}. In this case we conclude that 
the non-commutative rank is $r'$, and $A'$ and $U$ together serve as witnesses for 
this fact. If the limit subspace goes out of $\im(A')$ we need to go to an even 
larger blow-up space $\rblowup{(\rblowup{\cB}{d})}{d'}\cong \rblowup{\cB}{dd'}$ 
where 
$d'=r'+1$, to find a matrix $A''\in \rblowup{\cB}{dd'}$ of rank $r''dd'$ for some 
$r''>r'$. 

We reproduce the following theorem from \cite{IQS1} which summarizes the above 
discussion.
\vskip .5em
\noindent{\bf Theorem 5.10} in \cite{IQS1}.
Let $\cB\leq M(n, \F)$ and $\cA=\rblowup{\cB}{d}$. Assume that
we are given a matrix $A\in \cA$ with $\rk(A)=rd$. Let $d'$ be an integer $>r$.
Suppose that $|\F|$ is $(ndd')^{\Omega(1)}$, and if $\fdchar(\F)=p>0$ then assume
$p\nmid                                                                                            
dd'$. There
exists a deterministic algorithm that returns either an $(n-r)d$-shrunk
subspace for $\cA$ (equivalently, an $(n-r)$-shrunk subspace for $\cB$), or a
matrix $A^*\in \cA\otimes M(d', \F)$ of rank at least $(r+1)dd'$. This algorithm
uses $\poly(ndd')$ arithmetic operations and, over $\Q$, all intermediate numbers
have bit lengths polynomial in the input size.
\vskip .5em
The main point is that to carry out the 
augmenting path idea for the bipartite 
maximum matching problem in the non-commutative rank setting, the right approach 
is to play with shrunk subspaces on the one hand, and matrices in the blow-up spaces 
on the other. 

The alert reader may now notice that the above strategy leads to an 
exponential-time algorithm. Recall that we start with $A\in \cB$ of rank $r$. If 
$\ncrk(\cB)=n$, then we may end up finding $A^*\in \rblowup{\cB}{d^*}$ of rank 
$nd^*$ where $d^*$ can be as large as $n!/r!$. This is because, increasing the 
scaled-down rank from $r'$ to $r'+1$ would lead to a 
multiplicative factor of $r'+1$ in the size of the blow-up space. This is 
why the algorithm in \cite{IQS1} runs in time $\poly(n!)$.  We reproduce that result
below.
\vskip .5em
\noindent{\bf Theorem 5.11} in \cite{IQS1}.
Suppose we are given $\cB:=\langle B_1, \dots, B_m\rangle\leq M(n, \F)$, and
$A\in\cB$ with $\rk(A)=s<n$. Let $d=(n+1)!/(s+1)!$, and assume that
$|\F|=\Omega(nd)$. Then there exists a deterministic algorithm, that
computes a matrix $B\in \cB\otimes M(d', \F)$ of rank $rd'$ for some $d'\leq d$
and, if $r<n$, an $(n-r)$-shrunk subspace for $\cB$. The
algorithm uses $\poly(n, d)$ arithmetic operations, and when working over $\Q$,
has bit complexity polynomial in $n$, $d$ and the input size.

\subsection{Progress on non-commutative rank since 2015.}
\label{sec:progress}

Recall that an important question was to upper bound $\sigma(R(n, 
m))$, and exponential bounds were established in 
\cite{derksen_bound} and \cite{IQS1}. These turned out to be 
sufficient for~\cite{GGOW} to 
compute the non-commutative rank in deterministic polynomial time,
over fields of characteristic zero, 
by a more refined analysis of 
Gurvits' algorithm in \cite{Gurvits}.
After \cite{GGOW}, the following problems were still
open: 
\begin{enumerate}
\item[(1)] 
a 
polynomial-time algorithm for the problem over finite 
fields, and 
\item[(2)] a search version of the problem, that is, explicitly exhibiting a 
matrix of rank $rd$ in the $d$-th blow-up and a proof that the non-commutative 
rank is at most $r$, even over fields of 
characteristic 0. 
\end{enumerate}

Recently, Derksen and Makam\cite{DM2}
proved that it suffices to take the maximum over  $d$ between $1$ and $n-1$, for sufficiently large fields, by discovering a 
concavity property of blow-ups, and using the regularity lemma of blow-ups from 
\cite{IQS1}. In the first version of this note, by 
showing that the concavity property can be made constructive, and building on the 
techniques from \cite{IQS1}, we obtained a deterministic polynomial-time algorithm 
for the non-commutative rank problem, which is constructive and works over large 
enough fields regardless of the characteristic. This answers the two open 
problems just mentioned. 

After the first version of this note appeared on the arXiv, we discovered that a very simple 
observation already gives us the result, without having to use the results from 
Derksen and Makam. This argument also gives a different proof that the nullcone of 
the matrix semi-invariants 
is 
generated by polynomials in $R(n,m)$ of degree less than or equal to $O(n^2)$.  We should point out 
that recently Derksen and Makam~\cite{DM3} also gave a 
second proof 
of 
the regularity lemma. However their proof is not 
known to be 
constructive.

We now state our main result and the contributions of this paper.

\begin{theorem}
\label{thm:main}
Let $\cB\leq M(n,\F)$ be a matrix space given by a linear basis, and suppose 
$|\F|=n^{\Omega(1)}$. 
Suppose that $\cB$ has (a priori unknown) non-commutative rank $r$. 
Then there is a deterministic algorithm
using $n^{O(1)}$ arithmetic operations over $\F$ that constructs
a matrix of rank $rd$ in a blow-up $\rblowup{\cB}{d}$ 
for some $d\leq r+1$ 
as well as an $(n-r)$-shrunk subspace 
of $\F^n$ for $\cB$. 
When $\F=\Q$, the final data as well  as all the intermediate
data have size polynomial in the size of the input data
and hence the algorithm runs in polynomial time.
\end{theorem}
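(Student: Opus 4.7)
The plan is to combine a witness-producing algorithm from \cite{IQS1} with a reduction procedure that compresses the blow-up parameter down to at most $r+1$, and then to extract a shrunk subspace from the compressed witness. First, I would invoke the algorithm of \cite{IQS1} to compute the non-commutative rank $r$ of $\cB$ and to exhibit a matrix $A_0 \in \rblowup{\cB}{d_0, d_0}$ of rank $r d_0$ for some $d_0 = 2^{O(n \log n)}$. This provides a starting witness, but with $d_0$ far too large for the bound claimed in the theorem, so the bulk of the work is to bring the blow-up parameter down to $r+1$.

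Next, I would iteratively reduce the blow-up parameter. The target step is: given $A \in \rblowup{\cB}{d, d}$ of rank $rd$ with $d > r+1$, produce $A' \in \rblowup{\cB}{d-1, d-1}$ of rank $r(d-1)$. To do this, I would invoke the improved regularity lemma of the present note (which removes the coprime condition between $d$ and $\fdchar(\F)$ present in \cite{IQS1}) to put $A$ into a block-structured form in which some index $i \in \{1, \ldots, d\}$ can be identified whose row block and column block may be deleted while preserving the rank-to-dimension ratio $r$. As outlined in the introduction, the existence and efficient location of such an $i$ can be achieved either by a greedy argument relying on a counting principle across row and column blocks, or by a constructive version of the Derksen--Makam concavity property. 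Iterating this step at most $d_0 - (r+1)$ times brings us to a witness in some $\rblowup{\cB}{d, d}$ with $d \leq r+1$.

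Finally, I would extract an $(n-r)$-shrunk subspace of $\F^n$. The rank-$rd$ witness $A$ in the small blow-up has left kernel of dimension $d(n-r)$ in $\F^{nd}$, and the block structure produced by the reduction guarantees that this kernel contains (or projects onto) a subspace of the form $V \otimes \F^d$, where $V$ is the desired $(n-r)$-shrunk subspace for $\cB$. The main obstacle will be the reduction step: one must prove that for $d > r+1$ the regularity lemma truly yields a deletable pair of blocks, and that finding this pair takes only polynomially many arithmetic operations in $\F$, which in turn is exactly the content of the ``simple observation'' mentioned after the theorem's statement. Over $\Q$, polynomial bit complexity then follows since the entire procedure uses only elementary linear algebra together with modest field extensions, both of which keep entry sizes polynomially bounded throughout.
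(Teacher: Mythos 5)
Your plan has three genuine gaps, and the first is fatal to the complexity claim. You propose to first run the algorithm of \cite{IQS1} to obtain a witness in $\rblowup{\cB}{d_0,d_0}$ with $d_0=2^{O(n\log n)}$ and only then compress. But producing (or even writing down) a matrix in an exponentially large blow-up already costs exponentially many arithmetic operations; the exponential blow-up growth is precisely why the algorithm of \cite{IQS1} is not polynomial. The paper never materializes a large witness: it interleaves the rank-increment step (Theorem~\ref{thm-blup-incr}, which from a rank-$rd$ matrix at small $d$ either outputs the shrunk subspace or a matrix of scaled rank $r+1$ at some $d'<r^2$) with the blow-up reduction, so the parameter never exceeds $O(r^2)$ at any stage. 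Second, your reduction step ``given $A\in\rblowup{\cB}{d,d}$ of rank $rd$ with $d>r+1$, delete one block slot and round up with the regularity lemma'' is not justified when $r<n$: deleting a block row and column of the $M(d,\F)$ factor can lose up to $2n$ in rank, and the regularity lemma (Lemma~\ref{lem_reg_blowup}) can only round back up to $r(d-1)$ if the remaining rank exceeds $(r-1)(d-1)$, which forces roughly $d>2n-r+1$, not $d>r+1$. The paper's greedy lemma (Lemma~\ref{lem:greedy}) is stated and used only in the full-rank case, and the crucial device you omit is the nonsingular $(r+1)\times(r+1)$ \emph{window} returned by Theorem~\ref{thm-blup-incr}: the compression is performed inside the restricted $(r+1)\times(r+1)$ matrix space, where the blow-up matrix has full rank, so the counting goes through with $n$ replaced by $r+1$; the resulting small matrices $T_i$ are then re-inserted against the original basis $B_i$ (and shrunk with Lemma~\ref{lem:red_data}). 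The Derksen--Makam route likewise needs this windowed, full-rank setting, not a single deletion applied to $\cB$ itself.

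Third, the certificate extraction is wrong as stated: the kernel of a maximum-rank matrix in a blow-up neither contains nor projects onto an $(n-r)$-shrunk subspace in general (already for $\cB=\langle E_{11},E_{12}\rangle\leq M(2,\F)$, $r=1$, the kernel of the rank-one witness $E_{11}$ is $\langle e_2\rangle$, which is not shrunk; the correct $1$-shrunk subspace is $\F^2$). A rank-$rd$ witness only certifies the lower bound $\nrk(\cB)\geq r$; the matching upper-bound certificate is produced by the augmentation procedure of Theorem~\ref{thm-blup-incr} exactly when it fails to increase the scaled rank, via the (second) Wong-sequence argument of \cite{IQS1}, and an $(n-r)d$-shrunk subspace of the blow-up is then converted to an $(n-r)$-shrunk subspace of $\F^n$ by the embedding observation of \cite[Proposition 5.2]{IQS1}. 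Relatedly, your plan assumes $r$ is computed up front, whereas the algorithm does not know $r$: termination is detected precisely by the event that the augmentation step returns the shrunk subspace.
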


Compared with the algorithm in \cite{GGOW}, our algorithm has 
the 
advantages of (1) working with arbitrary large enough fields, and (2) outputting a 
shrunk subspace and a matrix in a blow-up space certifying 
that the non-commutative rank is $r$. Note that the second feature is new even 
over $\Q$.  We also show that the small 
finite fields case can be handled as well.

\begin{remark}
\begin{enumerate}
\item[(a)]
If the constructivized version of Derksen and Makam \cite{DM2} is used, 
then in the above theorem we can 
improve the parameter slightly to $d\leq r-1$ instead of $d\leq r+1$.
\item[(b)]
Polynomial running time of the algorithm can also be proved for
a wide range ``concrete'' base fields $\F$. These include 
sufficiently large finite fields, and also number fields
and transcendental extensions of constant degree
over finite fields and over number fields.
\item[(c)] In particular, the non-commutative rank can be computed in 
deterministic polynomial time in positive characteristic 
as well, assuming that the ground field is sufficiently 
large.
\end{enumerate}
\end{remark}

Our result also settles a question of Gurvits~\cite{Gurvits},
asking if it is possible to decide efficiently, over fields of positive 
characteristic,
whether or not there exists a non-singular matrix in a matrix space 
having the Edmonds-Rado property. Recall that a matrix space has the 
{\em Edmonds-Rado property} if it satisfies the promise that it either contains a 
non-singular matrix, or it
 shrinks some subspace. Since the algorithm in ~\ref{thm:main} efficiently 
 tells whether the given 
matrix space has 
a shrunk subspace (e.g. the non-commutative rank is not full), it settles Gurvits' 
question, when the field size is as stated in the hypothesis.

\paragraph{Over small finite fields.}
From the above, we have seen a polynomial upper bound on $\sigma(R(n, m))$,  and 
settled the non-commutative rank problem as well as SDIT for the Edmonds-Rado 
class, 
provided that the underlying field is large enough.
However we can say more, even when the base field is a ``too small'' finite field. 

\begin{corollary}
\label{cor:small}
Let $\F$ be a finite field of size $s<n^{O(1)}$. 
\begin{enumerate}
\item Let $R(n, m)$ be the ring of matrix semi-invariants over $\F$. Then 
$\sigma(R(n, m))\leq O((n^2-n)\log_s n).$
\item Let $\cB\leq M(n,\F)$ be a matrix space given by a linear basis with a 
priori unknown non-commutative rank $r$. 
There is a deterministic polynomial-time algorithm
that constructs
a matrix of rank $rd$ in a blow-up $\rblowup{\cB}{d}$ 
for some $d\leq O(r\log_s n)$, as well as an $(n-r)$-shrunk subspace 
of $\F^n$ for $\cB$. 
\item Let $\cB\leq M(n,\F)$ be a matrix space given by a linear basis satisfying 
the Edmonds-Rado property. Then there exists a deterministic polynomial-time 
algorithm that can decide whether $\cB$ has a non-singular matrix, or a 
shrunk-subspace. 
\end{enumerate}
\end{corollary}

\paragraph{Techniques.} 
As described in the the iterating  over step in Section\ref{step3}, the algorithm in \cite{IQS1} takes exponential time
because we increase the blow-up size in an iterative way, and in each iteration 
the blow-up size is increased multiplicatively by the ``scaled'' rank. 
The key new 
insight is that we can keep the blow-up size small: when the scaled rank is $r$, 
then 
the blow-up size can be brought back to $O(r)$. As mentioned, we offer two methods 
to realize this reduction idea: a simpler method from us, and a method based 
on the technique of Derksen and Makam~\cite{DM2}. 

We also provide a technical improvement to the constructive regularity lemma used in the rounding up the rank step of the algorithm
described in \ref{step2}. 
Recall that we use it in the algorithm in the following situation: given $A\in \cB\otimes M(d, \F)$ of rank 
$(r-1)d+k$ where $1<k<d$, we want to construct $A'\in \cB\otimes M(d, \F)$ of rank 
$\geq rd$ efficiently.  This was achieved under the 
condition that, if $\fdchar(\F)=p>0$, then $p$ and $d$ are coprime. In this note, we remove this coprime 
condition. 

\paragraph{Organization.} In Section \ref{sec:algo-issues} we first discuss algorithmic issues that arise when working over finite extensions of fields and how they are solved. Since all this appears with detailed proofs in our previous paper we only provide pointers to these issues and refer to \cite{IQS1} for details.
In  Section \ref{sec:cyclic} we give an efficient construction of cyclic field extensions of arbitrary degrees.  In  Section~\ref{sec:reg} we use this to prove the full regularity lemma. In Section \ref{subsec:new_algo} we prove the main Theorem~\ref{thm:main} using our blow-up 
reduction method. In Section~\ref{sec:small}  we give the proof for Corollary\~ref{cor:small}. Finally 
in  Section~\ref{sec:dm} we show that the Derksen--Makam technique can be 
constructivized to provide another blow-up reduction method. 

\section{Preparations on certain algorithmic issues}
\label{sec:algo-issues}
In this section we highlight algorithmic issues which need to be addressed to ensure that our algorithms run in polynomial time. All these issues have been addressed in our earlier paper. So we only indicate briefly where these issues arise and what needs to be done. For details and proofs the really interested reader should refer to \cite{IQS1}.

\paragraph{From the extension field to the original field.}
Assume that for some extension field $\K$ of $\F$ we are given a
matrix $A'\in \cB \otimes_\F \K\leq M(n, \K)$ of rank $r$. Then, if $|\F|> r$,
using the method of \cite[Lemma 2.2]{GIR}, we can efficiently find a matrix $A\in                                      
\cB$
of rank at least $r$. This procedure is also useful to keep sizes of
the occurring field elements small. This is how it gets used in Lemma \ref{lem:reg_technical} and in Theorem~\ref{thm:main}.
We give details for this procedure alone.

Let $S\subseteq \F$ with $|S|=r+1$ and
let $B_1,\ldots,B_\ell$ be an $\F$-basis for $\cB$. Then
$A'=a_1'B_1+\ldots+a_\ell'B_\ell$, where $a_i'\in\K$. As $A'$ is of rank $r$,
there
exists an $r\times r$ sub-matrix of $A$ with nonzero determinant. Assume that
$a_1'\not \in S$. Then
we consider the determinant of the corresponding sub-matrix of the polynomial
matrix $xB_1+a_2'B_2+\ldots a_{\ell}'B_\ell$. This determinant is
a nonzero polynomial of degree at most $r$ in $x$. Therefore there exists
an element $a_1\in S$ such that $a_1B_1+a_2'B_2+\ldots a_{\ell}'B_\ell$ has rank
at least $r$. Continuing with $a_2',\ldots,a_{\ell}'$, we can ensure
that all the $a_i$'s are from $S$. Since the $B_i$'s span $\cB$, the resulting
matrix of rank at least $r$ is
in $\cB$. We record this as a fact.

\begin{lemma}[{Data reduction, \cite[Lemma 2.2]{GIR}}]
\label{lem:red_data}
Let $\cB\leq M(k\times \ell,\F)$ be given by a basis
$B_1,\ldots,B_m$, and let $\K$ be an extension field of $\F$. Let $S$ be a subset
of $\F$
of size at least $r+1$. Suppose that we are
given a
matrix $A'=\sum a_i'B_i \in \cB\otimes_{\F}\K$ of rank at least $r$.
Then
we can find $A=\sum a_i B_i\in \cB$ of rank also at least $r$
with $a_i\in S$. The algorithm uses $\poly(k, \ell, r)$
rank computations for matrices of the form $\sum a_i''B_i$
where $a_i''\in \{a_1',\ldots,a_m'\}\cup S$.
\end{lemma}

\paragraph{Dealing with the need for a primitive root of unity.}
Lemma \ref{lem:cyclic_gen} assumes the field $\F'$ contains a
known primitive $d$th root of unity
$\zeta$. In actual applications, we start with a field $\F$ without a primitive
$d$th root of unity in it, and attach one symbolically,
which we still denote by
$\zeta$. However, this
%
may cause some problem. Namely, constructing
$\F'=\F[\zeta]$ would require factoring
the polynomial $x^{d}-1$ over $\F$, a task which cannot be
accomplished using basic arithmetic operations. To see that
this is indeed an issue notice that
 a black-box field may contain certain ``hidden'' parts
of cyclotomic fields.
Of course,
over certain concrete fields, such as the rationals,
number fields or finite
fields of small characteristics, this can be done in polynomial time.
However, even over finite fields of large characteristic no
deterministic polynomial time solution to this task is known
at present.

To get around this issue, one can
perform the required computations over an appropriate factor algebra $R$ of
the algebra $C=\F[x]/(x^{d}-1)$ in place $\F'$ as if $R$ were a field.
To be specific, as $d$ is not divisible by the characteristic,
we know that $C$ is semisimple -- actually it is isomorphic to
a direct sum of ideals, each of which is isomorphic
to the splitting field $\F[\sqrt[e]{1}]$ of the polynomial $x^{e}-1$ for
some divisor $e$ of $d$, and the projection of $x$ to such an ideal
is a primitive $e$th root of unity. It follows that if we compute the ideal $J$ 
generated by
annihilators of $x^e -1$, for all $e$ a proper divisor of $d$, then $R=C/J$ is 
isomorphic to the direct
sum of copies of the splitting field $\F'$ of $x^{d}-1$, and the projection
of $x$ to each component is a primitive $d$th root of unity. And this
property is inherited by any proper factor of $R$. A
computation using $R$ instead of $\F'$ may fail
only at a point where we attempt to invert an non-invertible
element of $R$. However, such an element must be a zero divisor.
%
When this situation occurs, we replace $R$
with the factor of $R$ by its ideal generated by the zero divisor and
restart the computation. Such a restart can clearly happen at most $d-2$ times.

Now consider the task of computing the rank of a matrix in $M(n, \F')$. As 
described above we work instead with coefficients in $R$. 
Note that we cannot
talk about the ``rank'' of matrices in
$M(n, R)$ which is not well-defined. But since $R$ is a direct sum of $\F'$, the
decomposition of $R$ induces a decomposition of $M(N, R)$ into a direct sum of
copies of $M(N, \F')$.  
We call the images of the projections of a matrix
$B\in M(N, R)$ to the direct summands 
the {\em components} of $B$. The following lemma from \cite{IQS1} describes how to 
compute the 
maximum rank over the components. 
\begin{lemma}[{\cite[Lemma 4.6]{IQS1}}]
\label{lem:correct_rank}
Let $R$ and $\F'$ be as above, and suppose we are given a matrix $B\in M(N, R)$. 
Then there exists a deterministic polynomial-time algorithm that computes the 
maximum rank over the components of $B$. 
\end{lemma}

We remark that the issue with the need of roots of unity
and working over rings instead of fields occurs only when we apply
the algorithm for the constructive regularity lemma. It has no
influence of the other parts of the algorithm, as 
after having constructed a matrix over the ring 
$R$ having sufficiently large ``rank'', we can apply
Lemma~\ref{lem:red_data} to obtain a matrix over the
base field $\F$ of the same or larger rank, provided that $\F$
is large enough. (Cyclotomic
extension fields of finite fields  
can be constructed deterministically in time
polynomial in the field size, so over small fields
such issues do not occur at all.) 


\paragraph{Computing the rank of matrices over a rational function field in few 
variables.}
In  Lemma \ref{lem:reg_technical} we will need to compute the rank of matrices over a rational function field of $\F'$ in two variables.
The following proposition from ~\cite{IQS1} describes how when the field size $\F'$ is large we can find a matrix 
over the base field with the same rank as the matrix we start with. 

\begin{proposition}[{\cite[Lemma 4.8]{IQS1}}]
\label{prop:rank}
Let $\F'$ be a field and $\K=\F'(X_1,\dots, X_k)$ be a pure transcendental
extension of $\F'$. Let $A$ be an $N \times N$ matrix with entries as quotients of
polynomials from $\F'[X_1, X_2, \dots, X_k]$, where the polynomials are explicitly
given as sums of monomials. Assume that the degrees of the
polynomials appearing in $A$ are upper bounded by $D$. If
$|\F'|=(ND)^{\Omega(k)}$, then we can find in
time $(ND)^{O(k)}$ a matrix $B\in M(N, \F')$ with $\rk(B)=\rk(A)$.
\end{proposition}

\section{Efficient construction of cyclic field extensions of arbitrary degrees}
\label{sec:cyclic}
A cyclic extension of a field $\K$
is a finite Galois extension of $\K$ having a cyclic Galois group.
By constructing a cyclic extension $\ELL$ we mean constructing
the extension as an algebra over $\K$, e.g., by giving an
array of {\em structure constants} with respect to a $\K$-basis for $\ELL$
defining the multiplication on $\ELL$ as well as specifying a generator 
of the Galois group, e.g, by its matrix with respect to a $\K$-basis.

\begin{lemma}
\label{lem:cyclic_p}
Given a prime $p$ and an integer $s\geq 1$,
one can construct in time $\poly(p^s)$ 
a cyclic extension $K_s$ 
of $\F_p(Z)$ of degree $p^s$ such that $\F_p$ is 
algebraically closed in $K_s$. The field $K_s$
will be given in terms of structure constants
with respect to a basis over $\F_p(Z)$, and the
generator $\sigma$ for the Galois group will be given
by its matrix in terms of the same basis. The structure
constants as well as the entries of the matrix for $\sigma$
will be polynomials in $\F_p[Z]$ of degree $\poly(p^s)$.
\end{lemma}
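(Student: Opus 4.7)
The plan is to realize $K_s$ via Artin--Schreier--Witt theory: it is the splitting field over $\F_p(Z)$ of the Witt-vector equation $\wp(\vec{\alpha}) = (Z, 0, \dots, 0)$ in $W_s$, where $\wp$ is Frobenius minus identity on Witt vectors of length $s$. Concretely I would build $K_s$ as a tower $\F_p(Z) = K_0 \subset K_1 \subset \dots \subset K_s$, with each step an Artin--Schreier extension of degree $p$ obtained by solving successively for $\alpha_0, \alpha_1, \dots, \alpha_{s-1}$ from the Witt relation.

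For the base step, set $K_1 = \F_p(Z)[\alpha_0]/(\alpha_0^p - \alpha_0 - Z)$. Because $Z = \alpha_0^p - \alpha_0$, we have $K_1 = \F_p(\alpha_0)$, a rational function field in $\alpha_0$; hence $\F_p$ is automatically algebraically closed in $K_1$, and the Galois generator is $\sigma\colon \alpha_0 \mapsto \alpha_0 + 1$. For the inductive step, at level $i$ I would expand the $i$-th coordinate of $\wp(\alpha_0, \dots, \alpha_i) = (Z, 0, \dots, 0)$ using the standard Witt-addition polynomials $S_j \in \Z[X_0, \dots, X_j; Y_0, \dots, Y_j]$ to isolate $\alpha_i^p - \alpha_i = \gamma_i$, where $\gamma_i$ is a concrete polynomial expression in $\alpha_0, \dots, \alpha_{i-1}$ and $Z$. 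Simultaneously, the generator extends via $\sigma(\alpha_j) = \alpha_j + S_j(\alpha_0, \dots, \alpha_{j-1}; 1, 0, \dots, 0)$, which is the Witt analogue of ``add $(1, 0, \dots, 0)$.''

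The Galois-theoretic correctness (that $K_{i+1}/K_i$ really has degree $p$ and that $K_s/\F_p(Z)$ is cyclic of degree $p^s$) reduces to showing that $(Z, 0, \dots, 0)$ is not in the image of $\wp$ on $W_s(\F_p(Z))$; I would verify this by inspecting pole orders at $Z = \infty$. The same pole-order analysis then shows the tower is totally ramified at $\infty$: at level $1$, the valuation satisfies $v_\infty(\alpha_0) = -1$, forcing ramification index $p$, and inductively $\gamma_i$ acquires a pole of order coprime to $p$, giving another totally ramified step of degree $p$. Total ramification at a degree-one place of $\F_p(Z)$ pins the residue field to $\F_p$, which forces the constant field of $K_s$ to be $\F_p$, i.e.\ $\F_p$ is algebraically closed in $K_s$.

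For complexity, the Witt-addition polynomial $S_j$ has total degree at most $p^j$ with integer coefficients of bit length $\poly(p^j)$; substituting elements of $Z$-degree $\poly(p^s)$ keeps all intermediate data of polynomial size. I would store $K_s$ by its structure constants in the monomial basis $\{\prod_{i<s} \alpha_i^{e_i} : 0 \le e_i < p\}$ (of size $p^s$) and record $\sigma$ as a $p^s \times p^s$ matrix over $\F_p(Z)$, all in $\poly(p^s)$ arithmetic operations. The main obstacle is the Witt-vector bookkeeping: one must make the iterative construction both rigorous (so the tower does not collapse before height $s$) and efficient (tight $Z$-degree bounds), which requires carefully expanding the $S_j$'s modulo the tower relations $\alpha_i^p = \alpha_i + \gamma_i$ and keeping each exponent of $\alpha_i$ below $p$ throughout.
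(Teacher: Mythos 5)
Your approach (explicit Artin--Schreier--Witt vectors) is genuinely different from the paper's, which follows the classical inductive construction: at level $j$ it exhibits an explicit trace-one element $\beta_j=(-1)^j\prod_{i=1}^{j}\omega_i^{p-1}$, obtains $\alpha_j$ from an explicit additive Hilbert~90 summation, adjoins a root of $X^p-X-\alpha_j$, and gets algebraic closedness of $\F_p$ in $K_s$ not via ramification but by observing that $K_1$ is a rational function field and is the \emph{unique} degree-$p$ subextension of the cyclic extension $K_s/\F_p(Z)$. Your ramification argument for the constant field is correct (and could even be simplified: once cyclicity is known, total ramification of $K_1/K_0$ at infinity already forces total ramification of $K_s/K_0$ there, since the subfields of a cyclic $p$-power extension are linearly ordered). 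One statement does need repair: ``$(Z,0,\ldots,0)$ is not in the image of $\wp$'' only yields a nontrivial extension; to get degree exactly $p^s$ you need the class of $(Z,0,\ldots,0)$ to have order $p^s$ in $W_s/\wp W_s$, i.e.\ $Z\notin\wp(\F_p(Z))$, or else a proof of your (asserted, not proved) inductive claim that each $\gamma_i$ has pole order prime to $p$.

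The substantive gap is the complexity analysis, which is the actual content of the lemma. You assert that expanding the Witt addition/carry polynomials $S_j$ modulo the tower relations stays within $\poly(p^s)$ arithmetic, but no argument is given, and the naive route fails: $S_j$ involves $2(j+1)$ variables of weighted degree $p^j$, its symbolic expansion has $p^{\Theta(j^2)}$ candidate monomials, which is super-polynomial in $p^s$ for $j=s$, and the recursive definition of the carries requires exact division by $p$ of integral polynomial identities, so evaluating them on elements of $K_i$ (a characteristic-$p$ field you are in the middle of constructing) needs characteristic-zero lifts or some other device that you do not supply. Likewise the claimed $\poly(p^s)$ bound on the $Z$-degrees of the structure constants and of the matrix of $\sigma$ is asserted rather than proved; the paper's proof consists essentially of such an induction (degrees grow by a factor at most $2p+3$ per level, giving the bound $(2p+3)^s=\poly(p^s)$). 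As it stands, your plan establishes the existence of the desired cyclic extension, but not its $\poly(p^s)$-time constructibility with the stated degree bounds, which is what the lemma is for.
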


\begin{proof}
First we briefly recall the general
construction given in Section~6.4 of \cite{Ramanathan}.
This, starting from a field $K_0$ of characteristic $p$, recursively 
builds a tower  $K_0<K_1<\ldots<K_{s}$ of fields such that
$K_j$ is a cyclic extension of $K_0$ of degree $p^j$. Assume that $K_{s}$ 
together
with a $K_0$-automorphism $\sigma_{s}$ of order $p^{s}$
has already been constructed. (Initially let $\sigma_0$ be the identity map on
$K_0$.) Then 
for any element $\beta_{s}\in K_{s}$ with
$\Tr_{K_{s}:K_0}(\beta_s)=1$ and for any $\alpha_{s}\in K_{s}$
such that $\alpha_{s}^{\sigma_{s}}-\alpha_{s}=\beta_{s}^p-\beta_{s}$
the polynomial $X^p-X-\alpha_{s}$ is irreducible in $K_{s}[X]$.
(Existence of $\alpha_{s}$ with
the required property follows from the additive Hilbert~90.)
Put $K_{s+1}=K_{s}[X]/(X^p-X-\alpha_{s})$ and let $\omega_{s+1}\in K_{s+1}$ 
be the image of of $X$ under the projection $K_{s}[X]\rightarrow K_{s+1}$.
Then $\sigma_{s}$ extends to a $K_0$-automorphism $\sigma_{s+1}$
of degree $p^{s+1}$ of $K_{s+1}$ such that 
$\omega_{s+1}^{\sigma_{s+1}}=\omega_{s+1}+\beta_s$. 
This gives a cyclic extension of degree $p^{s+1}$. 

Now we specify some details of a polynomial time
construction for $K_0=\F_p(Z)$ following the method outlined above. 
In the first step we take $\beta_0=1$, and, in order to guarantee 
that the only elements in $K_1$ which are algebraic over $\F_p$ is $F_p$ (we also 
use the phrase $F_p$ is algebraically closed in $K_1$ when this property holds), 
we take $\alpha_0=Z$. Then $K_1$ is a pure transcendental
extension of $\F_p$. As $K_s/K_0$ is a cyclic extension of oder $p^s$,
it has a unique subfield which is an order $p$ extension of $K_0$. 
This must be $K_1$. Then $\F_p$ has no proper finite extension in 
$K_s$ as otherwise $K_0$ would also have another degree $p$ extension.

We consider the following $K_0$-basis for $K_s$: 
$$\Gamma_s=\left\{\prod_{j=1}^s\omega_j^k,\;\;\;(k=0,\ldots,p-1)\right\},$$
where $\omega_j$ is a root of $X^p-X-\alpha_{j-1}$ in $K_j$.
We claim that $\Tr_{K_j:K_{j-1}}(\omega_j^{p-1})=-1$. Indeed,
in the $K_{j-1}$-basis $\omega_j^0,\ldots,\omega_j^{p-1}$ for $K_j$,
in the matrix of multiplication by $\omega_j^{p-1}$ the diagonal entries 
consist of $p-1$ ones and one zero. Therefore
$\Tr_{K_j:K_{j-1}}(\omega_j^{p-1}
\gamma)=-\gamma$ for every $\gamma\in K_{j-1}$, whence
$\Tr_{K_j:K_0}(\omega_j^{p-1}\gamma)=-\Tr_{K_{j-1}:K_0}(\gamma)$.
Now by induction we obtain
$\Tr _{K_j:K_0}
\prod_{i=1}^j\omega_i^{p-1}=(-1)^j$.
Therefore in each step (when $j>0$) we can choose 
$\beta_j=(-1)^j\prod_{i=1}^j\omega_i^{p-1}$
and $\alpha_j$ thereafter, following the construction in the standard
proof of the additive Hilbert 90. Specifically,
we set
\begin{equation}\label{eq:alpha}
\alpha_j=
(-1)^{j+1}\sum_{k=1}^{p^j-1}\beta_j^{\sigma_j^k}\left(
\sum_{\ell=0}^{k-1}(\beta_j^p-\beta_j)^{\sigma_j^\ell}
\right).
\end{equation}
Then $\alpha_j^{\sigma_j}-\alpha_j=\beta_j^p-\beta_j$. 
Notice that $\alpha_j$ 
is a sum of terms with each of which, up to a sign, 
is a product of at most $p+1$ conjugates $\beta_j^{{\sigma_j^\ell}}$
(with various $\ell$s)
of $\beta_j$ ($\ell\leq p^{j}$) 

Assume by induction that the structure constants of $K_{j}$ with respect
to the basis $\Gamma_{j}$ are polynomials from $\F_p[Z]$ of degree
at most $\Delta_j$ and the same holds for the entries
of the matrix of $\sigma_j^{\ell}$ for every $1\leq \ell <p^j$ (written
in the same basis). For $j=1$ this holds with $\Delta_1=1$.
(To see this, observe that for $0\leq k,\ell<p$, 
the product
$\omega_1^{k}\omega_1^{\ell}$ is the basis element of
$\omega_1^{k+\ell}$ if $k+\ell<p$, while otherwise it 
equals the sum $\omega_1^{k+\ell-p+1}
+Z\omega_1^{k+\ell-p}.)$
Then, if we express $\alpha_{j}$ in terms of
the basis $\Gamma_j$ using Eq.~\ref{eq:alpha}, we obtain that its coordinates
are polynomials of degree at most $(2p+1)\Delta_j$. This is because
$(-1)^j\beta_j\in \Gamma_j$, whence $\beta_j^{\sigma^\ell}$ has
coordinates of polynomials of degree bounded by $\Delta_j$. In Eq.~\ref{eq:alpha}, 
we have the products of at most $p+1$ such elements, so the result will
have polynomial coordinates of degree at most $(2p+1)\Delta_j$. 

Now consider the product of two elements
$\omega_{j+1}^k\gamma_1$ and $\omega_{j+1}^\ell\gamma_2$
of $\Gamma_{j+1}$. Here $k,\ell<p$ and $\gamma_1,\gamma_2\in \Gamma_j$.
The coordinates of the product $\gamma_1\gamma_2$ with respect 
to $\Gamma_j$ are
polynomials of degree at most $\Delta_j$. The same 
holds for the product $\omega_{j+1}^{k+\ell}\gamma_1\gamma_2$ if
$k+\ell<p$. If $k+\ell>p$, then $\omega_{j+1}^{k+\ell}=
\omega_{j+1}^p\omega_{j+1}^{k+\ell-p}=(\omega_{j+1}+\alpha_j)
\omega_{j+1}^{k+\ell-p}$,
whence $\omega_{j+1}^{k+\ell}\gamma_1\gamma_2$
is the sum of $\omega_{j+1}^{1+k+\ell-p}\gamma_1\gamma_2$
and $\alpha_j\gamma_1\gamma_2$. The former term has coordinates
of degree at most $\Delta_j$, the coordinates of the latter 
are polynomials of degree at most $(2p+1)\Delta_j+\Delta_j+\Delta_j=
(2p+3)\Delta_j$.

Now consider the conjugate of $\omega_{j+1}^k\gamma$ 
by $\sigma_{j+1}^\ell$, where $1\leq \ell<p^{j+1}$, 
$1\leq k\leq p-1$ and $\gamma\in \Gamma_j$.
This conjugate is 
$(\omega_{j+1}^{\sigma_{j+1}^\ell})^k\gamma^{\sigma_{j+1}^\ell}$.
The second term
equals $\gamma^{\sigma_j^\ell}$ which has coordinates of degree
at most $\Delta_j$. To investigate the first term,
recall that $\omega_{j+1}^{\sigma_{j+1}}=\omega_{j+1}+\beta_j,$
whence
$$\omega_{j+1}^{\sigma_{j+1}^{\ell}}=
\omega_{j+1}+\sum_{r=0}^{\ell-1}
\beta_j^{\sigma_j^r}$$
The element $\delta=\sum_{r=0}^{\ell-1}
\beta_j^{\sigma_j^r}$, expressed in terms of $\Gamma_j$,
has again polynomial coordinates of degree at most $\Delta_j$.
Then $(\omega_{j+1}^{\sigma_{j+1}^\ell})^k$
is the sum (with binomial coefficients) of terms 
of the form $\omega_{j+1}^r\delta^{k-r}$.
The power $\delta^{k-r}$ has coordinates 
of degree at most $(k-r)\Delta_j+(k-r-1)\Delta_j\leq (2p-1)\Delta_j$
in terms of $\Gamma_j$, whence we conclude
that $(\omega_{j+1}^{\sigma_{j+1}^\ell})^k$
has, in terms of $\Gamma_{j+1}$ polynomial coordinates 
of degree at most $(2p-1)\Delta_j$. It follows that the matrix
of any power of $\sigma_{j+1}$ has polynomial entries
of degree at most $2p\Delta_j$.

We obtained that the function $(2p+3)^s=\poly(p^s)$ 
is an upper bound for both the structure constants and for 
the matrices of the powers of $\sigma_s$.
\end{proof}

\begin{lemma}
\label{lem:cyclic_gen}
Let $\F'$ be a field. Let $d$ be any non-negative integer. If $\fdchar(\F')=0$ 
then $d_1=d$. If $\fdchar(\F')=p>0$ then let $d_1$ be the $p$-free part of $d$, 
that is, $d=d_1p^s$, where $p\nmid d_1$ and $s\in \N$.
Assume that $\F'$
contains a known $d_1$th root of unity $\zeta$. Then 
a cyclic extension $\ELL$ degree $d$ of $\K:=\F'(X)$ can 
be computed
using $\poly(d)$ arithmetic operations. $\ELL$ will be given by structure
constants with respect to a basis, and the matrix for a generator of
the Galois group in terms of the same basis will also be given.
All the output entries (the structure constants as well as the entries 
of the matrix representing the Galois group generator) will be
polynomials of degree $\poly(d)$ in $\F'[X]$.
Furthermore for $\F'=\Q[\sqrt[d_1]{1}]$, the bit complexity 
of the algorithm (as well as the size of the output) is
$\poly(d)$. 
\end{lemma}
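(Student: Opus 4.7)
I would prove Lemma~\ref{lem:cyclic_gen} by splitting the desired degree as $d = d_1 \cdot p^s$ (with the convention $p^s = 1$ when $\fdchar(\F') = 0$), constructing separately a cyclic degree-$d_1$ extension $\ELL_1$ and a cyclic degree-$p^s$ extension $\ELL_2$ of $\K = \F'(X)$, and then taking their compositum. Since $\gcd(d_1, p^s) = 1$, the two extensions will automatically be linearly disjoint over $\K$, so $\ELL := \ELL_1 \otimes_\K \ELL_2$ will be a field of degree $d$ over $\K$, Galois with group $\Z/d_1 \times \Z/p^s \cong \Z/d$, hence cyclic with a generator $\sigma = \tau_1 \otimes \tau_2$ built from generators $\tau_i$ of the two factor Galois groups.

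For $\ELL_1$, the hypothesis that $\F'$ contains a primitive $d_1$-th root of unity $\zeta$ lets me use Kummer theory directly: take $\ELL_1 = \K[Y]/(Y^{d_1} - X)$. Irreducibility of $Y^{d_1} - X$ over $\K = \F'(X)$ follows from Capelli's criterion, since a simple valuation-at-infinity argument shows $X$ is not a $q$-th power in $\F'(X)$ for any prime $q \mid d_1$ (and in $\fdchar \neq 2$ also not of the excluded form). The Galois group of $\ELL_1/\K$ is generated by $\tau_1 : Y \mapsto \zeta Y$. In the basis $\{1, Y, \ldots, Y^{d_1-1}\}$, the structure constants come from $Y^{d_1} = X$ (each equal to $0$, $1$, or $X$), and $\tau_1$ is diagonal with entries $\zeta^i$; both descriptions have size $\poly(d_1)$, and polynomial degrees in $X$ are at most $1$.

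For $\ELL_2$, in characteristic zero there is nothing to do. In positive characteristic, I would run the inductive Artin-Schreier-Witt construction from the proof of Lemma~\ref{lem:cyclic_p}, but starting from the base field $\F'(X)$ in place of $\F_p(Z)$ and using $\alpha_0 = X$ at the first step. The only property specific to $\F_p(Z)$ used in that proof is that $T^p - T - X$ is irreducible at the ground step; this transfers verbatim to $\F'(X)$ by a pole-order computation at infinity: for $\alpha \in \F'(X)$ with leading $X$-degree $n>0$, $\alpha^p - \alpha$ has $X$-degree $np \neq 1$, and for $n \leq 0$ it is regular at infinity, so no solution to $\alpha^p - \alpha = X$ exists. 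All subsequent steps only invoke additive Hilbert~90 and the trace identity $\Tr_{K_j:K_{j-1}}(\omega_j^{p-1}) = -1$, both of which are purely formal in characteristic $p$, so the entire tower goes through. The resulting $\ELL_2$ is cyclic of degree $p^s$ over $\K$, and the same inductive degree-tracking of Lemma~\ref{lem:cyclic_p} yields structure constants and a generator matrix with entries in $\F'[X]$ of degree $\poly(p^s)$.

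To finish, I would form $\ELL := \ELL_1 \otimes_\K \ELL_2$: the coprime degrees force $\ELL_1 \cap \ELL_2 = \K$, so this tensor product is a field of degree $d_1 p^s = d$. A $\K$-basis is the tensor product of the two constituent bases, and both the structure constants and the matrix of $\sigma = \tau_1 \otimes \tau_2$ are Kronecker products of the corresponding data of the factors; polynomial degrees of entries remain $\poly(d)$, and the total arithmetic count stays $\poly(d)$. For $\F' = \Q[\sqrt[d_1]{1}]$ one is in characteristic $0$ and only the Kummer piece is used, so the bit complexity is controlled directly by the explicit Kummer construction. The main obstacle, in my view, is the careful inductive accounting of polynomial degrees inside the Artin-Schreier-Witt tower that produces $\ELL_2$; this is precisely the estimate $(2p+3)^s = \poly(p^s)$ already achieved in the proof of Lemma~\ref{lem:cyclic_p}, which I would reuse verbatim rather than rederive.
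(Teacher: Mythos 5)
Your proposal is correct and follows essentially the same route as the paper: split $d=d_1p^s$, realize the prime-to-$p$ part by the Kummer extension generated by a $d_1$-th root of $X$ with Galois generator $Y\mapsto \zeta Y$, realize the $p$-part by the Artin--Schreier--Witt tower of Lemma~\ref{lem:cyclic_p} with the same degree bookkeeping, and combine the two by a tensor product with generator $\tau_1\otimes\tau_2$. The one point where you deviate is the base of the tensor: you rerun the tower directly over $\F'(X)$ and tensor over $\K=\F'(X)$, letting coprimality of $d_1$ and $p^s$ give linear disjointness, whereas the paper applies Lemma~\ref{lem:cyclic_p} verbatim over the prime field's $\F_p(X)$ and forms $\ELL_1\otimes_{\F_p(X)}\ELL_2$, using the clause that $\F_p$ is algebraically closed in $K_s$ (forced by the choice $\alpha_0=Z$) to guarantee the product is a field for an arbitrary $\F'$. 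Your variant is valid because you supply exactly the extra fact it needs, namely that $T^p-T-X$ has no root in $\F'(X)$ (the valuation-at-infinity argument), after which the inductive step of the tower is base-field-independent; the paper's variant instead confines all tower arithmetic to the prime field, so neither the irreducibility check over $\F'(X)$ nor the coprimality argument is needed there.
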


\begin{proof}
Put $\ELL_1=\F'(Y)$ and $X=Y_1^{d_1}$.
Then $1,Y_1,\ldots,Y_1^{d_1}$ are a $\F'(X)$-basis for
$\ELL_1$ with $Y_1^iY_1^j=Y_1^{i+j}$ if $i+j\leq d_1$ and
$XY_1^{i+j-d_1}$ otherwise. Further note that the linear extension $\sigma_1$ of 
the map
sending $Y_1^j$ to $\zeta^jY_1^j$ is an automorphism of degree
$d_1$. Then $\ELL_1$ is a cyclic extension of $\F'(X)$ of degree $d_1$. This 
procedure has been used in \cite{IQS1}. 

We can compute whether $\fdchar(\F')$ is a divisor of $d$
by testing the multiples of the identity element up to $d$.
If $\fdchar(\F')=0$, or if 
$\fdchar(\F')=p>0$ and $p\nmid d$, we are done. Note that in the following $p\leq 
d$. 

If $\fdchar(\F')=p>0$ and $p\mid d$, let $d_1$ be in the statement, so $d=d_1p^s$. 
Let $d_2=p^s$, and $\F_p$ be the prime field of $\F'$. Construct the cyclic 
extension 
of degree $d_2$ of $\F_p(X)$ over $\F_p$ by ~\ref{lem:cyclic_p}, 
and let the resulting field be $\ELL_2$. We also
obtain the matrix a generator $\sigma_2$
of the Galois group. Then put 
$\ELL=\ELL_1\otimes_{\F_p(X)} \ELL_2$. It contains a copy of
$\K=\F'(X)\cong \F'(X)\otimes_{\F_p(X)}\F_p(X)$. We take
the product basis for the structure constants
and for matrix representation of 
the automorphism $\sigma_1\otimes \sigma_2$.
\end{proof}

\section{The complete constructive regularity lemma}\label{sec:reg} 

We first present the formal statement of the regularity lemma in its full 
generality. We also add a technical notion that will be useful for the proof 
of Theorem~\ref{thm:main}. Let $n\in \N$, and let
$\vec{i}=(i_1, \dots, i_r)$, $\vec{j}=(j_1, \dots, 
j_r)$ be two sequences of integers, where $1\leq i_1<\dots <i_r\leq n$ 
and $1\leq j_1 < \dots < j_r\leq n$. For a matrix $A\in M(n, \F)\otimes M(d, 
\F)$, the $r\times r$ \emph{window} indexed by $\vec{i}$, 
$\vec{j}$ is the sub-matrix of $A$ consisting of the blocks indexed by $(i_k, 
j_\ell)$, $k, \ell\in[r]$.

\begin{lemma}[{Regularity of blow-ups}]
\label{lem_reg_blowup-window}
For $\cB\leq M(n, \F)$ and $\cA=\rblowup{\cB}{d}$, assume that 
$|\F| = (rd)^{\Omega(1)}$. 
Given a matrix 
$A\in 
\cA$ with $\rk A > (r-1)d$, there exists a deterministic algorithm 
that returns 
$\widetilde{A}\in \cA$ and an $r\times r$ window $W$ in $\widetilde{A}$ such that 
$W$ 
is nonsingular (of rank $rd$). 
This algorithm uses $\poly(nd)$ arithmetic 
operations and, over $\Q$, the algorithm runs in polynomial time.
In particular, all intermediate numbers have bit lengths polynomial in 
the input size.
\end{lemma}

The cases (a) $\mathrm{char}(\F)=0$, (b) $\mathrm{char}(\F)$ and $d$ are coprime, 
and $|\F| = (rd)^{\Omega(1)}$ were settled
in \cite[Lemma 5.7]{IQS1} which was reproduced in  ~\ref{step2}. The main issue
with the case when $d$ is not coprime to $\mathrm{char}(\F)$ was
that we did not have an efficient construction
of an appropriate Artin-Schreier-Witt extension of $\F_p(x)$,
Now we have such a construction in Lemma~\ref{lem:cyclic_p}. 

The proof makes use of the following two results from \cite{IQS1}.

\begin{proposition}[\mbox{\cite[Proposition 4.4]{IQS1}}]
\label{prop:cyclic_algebra}
Let $\ELL$ be a cyclic extension of degree $d$ of a field $\K$, and suppose that 
$\ELL$ is given by structure constants w.r.t. a $\K$-basis $A_1, \ldots, A_d$. 
Similarly, a generator $\sigma$ for the Galois group is assumed to be given
by its matrix in terms of the same basis.
Let 
$Y$ be 
a formal variable.
Then one can construct a $\K(Y)$-basis
$\Gamma$ of $M(d,\K(Y))$  such that
the $\K(Y^d)$-linear span of $\Gamma$
is a central division algebra over $\K(Y^d)$
of index $d$, using $\poly(d)$ arithmetic operations in $\K$.
Furthermore for $\K=\Q[\sqrt[d]{1}]$, the bit complexity 
of the algorithm (as well as the size of the output) is
also $\poly(d)$. 
\end{proposition}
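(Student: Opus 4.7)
The plan is to realize the cyclic crossed-product explicitly inside $M(d,\K(Y))$ using the regular representation. Write $S\in M(d,\K)$ for the matrix of $\sigma$ in the basis $A_1,\ldots,A_d$ (this is part of the input), and write $\widetilde A_i\in M(d,\K)$ for the matrix of left multiplication by $A_i$ on $\ELL$ (read off from the given structure constants). Set
\[
U := Y\cdot S \in M(d,\K(Y)),\qquad \Gamma := \bigl\{\widetilde A_i\,U^j : 1\le i\le d,\ 0\le j\le d-1\bigr\}.
\]
Because $\sigma$ has order $d$ and acts $\K$-linearly on $\ELL$, $S^d=\idmat$, and therefore $U^d = Y^d\,\idmat$. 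The identity $\sigma\circ L_a\circ \sigma^{-1} = L_{\sigma(a)}$ on $\ELL$ yields $U\,\widetilde A_i\,U^{-1} = S\widetilde A_i S^{-1} = \widetilde{\sigma(A_i)}$. Consequently the $\K(Y^d)$-subalgebra $D\subseteq M(d,\K(Y))$ generated by $\widetilde A_1,\ldots,\widetilde A_d$ together with $U$ is a homomorphic image of the cyclic crossed product $(\ELL(Y^d)/\K(Y^d),\sigma,Y^d)$, with $\Gamma$ spanning $D$ over $\K(Y^d)$. Using the factorization $\widetilde A_i U^j = Y^j(\widetilde A_i S^j)$, verifying that $\Gamma$ is a $\K(Y^d)$-basis of $D$ (and by a dimension count $d^2$ also a $\K(Y)$-basis of all of $M(d,\K(Y))$) reduces to showing that the $d^2$ matrices $\widetilde A_i S^j$ are $\K$-linearly independent in $M(d,\K)$. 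A vanishing relation $\sum_{i,j}c_{ij}\widetilde A_i S^j = 0$ translates through the regular representation to $\sum_j L_{\beta_j}\circ \sigma^j = 0$ with $\beta_j = \sum_i c_{ij}A_i\in\ELL$, and Artin's theorem on linear independence of the distinct characters $\mathrm{id},\sigma,\ldots,\sigma^{d-1}$ of $\ELL$ forces all $\beta_j=0$, hence all $c_{ij}=0$.

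The main step, where I expect the real work, is to show that $D$ is a central division algebra of index $d$ over $\K(Y^d)$. I would exploit the $Y^d$-adic valuation $v$ on $\K(Y^d)$, whose residue field is $\K$. Since $\ELL/\K$ is Galois (hence separable) we may write $\ELL=\K(\alpha)$ with minimal polynomial $f(T)\in\K[T]\subset\K(Y^d)[T]$ separable modulo $Y^d$; it follows that $\ELL(Y^d)/\K(Y^d)$ is unramified at $v$ with residue extension $\ELL/\K$, and $v$ extends uniquely to a valuation $w$ on $\ELL(Y^d)$ with value group $\Z$. For any $\beta\in\ELL(Y^d)^*$, the norm obeys $v(N_{\ELL(Y^d)/\K(Y^d)}(\beta)) = d\cdot w(\beta)\in d\Z$, so $Y^{dk}$ lies in the norm group only when $k\equiv 0\pmod d$. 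Hence the class of $Y^d$ in $\K(Y^d)^*/N_{\ELL(Y^d)/\K(Y^d)}(\ELL(Y^d)^*)$ has order exactly $d$, which by the classical norm-residue criterion for cyclic algebras is equivalent to $D$ being a central division algebra of index $d$ over $\K(Y^d)$; centrality over $\K(Y^d)$ is also transparent from the crossed-product presentation, since the fixed field of $\sigma$ on the maximal subfield $\ELL(Y^d)$ is $\K(Y^d)$.

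For the complexity claim, building $\widetilde A_1,\ldots,\widetilde A_d$ and $S$ from the input takes $\poly(d)$ arithmetic operations in $\K$; forming the $d^2$ products $\widetilde A_i S^j$ costs $O(d^5)$ further operations; and multiplication by $Y^j$ is purely symbolic, so every entry of every matrix in $\Gamma$ is of the form $(\text{scalar in }\K)\cdot Y^j$. Over $\K=\Q[\sqrt[d]{1}]$, represented in its cyclotomic power basis, each $\K$-scalar arising has bit length polynomial in $d$ and in the input size (entry heights grow only polynomially under $\poly(d)$ matrix multiplications and additions), yielding the stated $\poly(d)$ bit-complexity and output-size bound.
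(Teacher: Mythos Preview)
This proposition is quoted from \cite{IQS1} and is not proved in the present paper, so there is no proof here to compare against. Your argument is correct and is essentially the classical construction: you realize the cyclic crossed product $(\ELL(Y^d)/\K(Y^d),\sigma,Y^d)$ inside $M(d,\K(Y))$ via the regular representation of $\ELL$ over $\K$ together with $U=YS$, check that the $d^2$ elements $\widetilde A_i U^j$ form both a $\K(Y)$-basis of $M(d,\K(Y))$ (Artin independence of $\mathrm{id},\sigma,\ldots,\sigma^{d-1}$) and a $\K(Y^d)$-basis of their span $D$, and then verify the division-algebra property via the $Y^d$-adic valuation and the norm criterion. Two small remarks: (i) once you know the cyclic crossed product is central simple over $\K(Y^d)$, the surjection onto $D$ is automatically an isomorphism, so you could skip the separate $\K(Y^d)$-independence check; (ii) in the valuation step it is worth saying explicitly that $v(Y^d)=1$, so that $v((Y^d)^k)=k$, making the conclusion ``$Y^{dk}$ a norm $\Rightarrow d\mid k$'' immediate. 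The complexity analysis is fine; your $O(d^5)$ count for the products $\widetilde A_i S^j$ is honest, and the entries of the output matrices are monomials in $Y$ with coefficients in $\K$, which keeps the bit bound under control over $\Q[\sqrt[d]{1}]$.
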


\begin{lemma}[\mbox{Conditional regularity~\cite[Lemma~5.4]{IQS1}}]
\label{lem:reg_technical}
Assume that we are given a matrix 
$A\in \rblowup{\cB}{d}\leq M(dn, \F)$ with $\rk(A)=(r-1)d+k$ for some
$1<k<d$. Let $X$ and $Y$ be formal variables and put
$\K=\F'(X)$, where $\F'$ is a finite extension of $\F$ of degree at most $d$. 
Suppose further that $|\F|>(nd)^{O(1)}$ and that
we are also given a $\K(Y)$-basis $\Gamma$ of $M(d,\K(Y))$  such that
the $\K(Y^d)$-linear span of $\Gamma$
is a central division algebra $D'$ over $\K(Y^d)$. Let $\delta$
be the maximum of the degrees of the polynomials appearing as
numerators or denominators of the entries of the matrices 
in $\Gamma$. Then, using $(nd+\delta)^{O(1)}$ arithmetic
operations in $\F$, one can find a matrix $A''\in \rblowup{\cB}{d}$ with 
$\rk(A'')\geq rd$. Furthermore, over $\Q$ the bit complexity of
the algorithm is polynomial in the size of the input data (that is,
the total number of bits describing the entries of matrices and in the
coefficients of polynomials).
\end{lemma}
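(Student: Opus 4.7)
The strategy is to reduce to the conditional regularity lemma (Lemma~\ref{lem:reg_technical}); its only missing ingredient when $\mathrm{char}(\F) = p$ divides $d$ is an explicit cyclic division algebra of index $d$ over $\K(Y^d)$ with $\K = \F'(X)$, which in turn requires a constructive cyclic degree-$d$ extension $\L/\K$. I would factor $d = d_1 p^e$ with $\gcd(d_1,p)=1$, form $\F' = \F(\zeta_{d_1})$ (of degree at most $\varphi(d_1) < d$, matching the hypothesis of Lemma~\ref{lem:reg_technical}), build a cyclic degree-$d_1$ extension of $\F'(X)$ by the standard Kummer construction, build a cyclic degree-$p^e$ extension via Lemma~\ref{lem:cyclic_p}, and take the compositum. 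Since $d_1$ and $p^e$ are coprime, the compositum is cyclic of degree $d$ over $\K$, with an explicit generator assembled from the two individual Galois generators via the Chinese Remainder Theorem; all structure constants have size $\poly(d)$. Proposition~\ref{prop:cyclic_algebra} then delivers the required $\K(Y)$-basis $\Gamma$ with entries of polynomially bounded degree.

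With $\Gamma$ available, I invoke Lemma~\ref{lem:reg_technical} on $A$ whenever $\rk(A) = (r-1)d + k$ with $1 < k < d$, yielding $A'' \in \cA$ with $\rk(A'') \geq rd$. The boundary value $k = 1$ is handled by first perturbing $A$ to $A + \lambda B$ for some basis element $B$ of $\cA$ and some $\lambda$ drawn from a deterministic grid in $\F$ (possible since $|\F| = (rd)^{\Omega(1)}$, using a Schwartz--Zippel-style argument on the minors that witness the rank), until the perturbed matrix either already has rank at least $rd$ or has rank in the admissible range of Lemma~\ref{lem:reg_technical}. If $\rk(A) \geq rd$ to begin with, no boost is needed and I set $\tilde A := A$.

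The remaining task is to exhibit an $r \times r$ block-row/column index pair $(\vec i, \vec j)$ and some $C \in \cA$ such that the $(\vec i, \vec j)$-window of $\tilde A + C$ has rank $rd$. Because $\rk(\tilde A) \geq rd$ forces the non-commutative rank of $\cB$ to be at least $r$, a greedy/matroid argument on the block rows and columns of $\tilde A$ produces $\vec i$ and $\vec j$ such that the restricted space $\cB|_{\vec i, \vec j}$ also has non-commutative rank $r$; a deterministic Schwartz--Zippel grid search over $\F$ then finds a $C$ exposing a nonsingular window in $\tilde A + C$, and the updated $\tilde A$ is output. The main obstacle is the Artin--Schreier--Witt piece of the cyclic-extension construction in the first paragraph --- unavailable in \cite{IQS1} --- which is precisely what Lemma~\ref{lem:cyclic_p} now supplies; once that is in place, tracking polynomial-size bookkeeping and patching it to the Kummer piece are routine, and the rest assembles existing tools from \cite{IQS1}.
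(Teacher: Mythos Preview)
Your proposal is not a proof of Lemma~\ref{lem:reg_technical} at all: in its very first sentence you announce that the strategy is to \emph{reduce to} Lemma~\ref{lem:reg_technical}, and later you ``invoke Lemma~\ref{lem:reg_technical} on $A$''. Taken as a proof of the stated lemma this is circular. What you have actually sketched is a proof of Lemma~\ref{lem_reg_blowup} (the full regularity lemma), for which Lemma~\ref{lem:reg_technical} is one of the inputs.

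Note also that the paper does \emph{not} prove Lemma~\ref{lem:reg_technical}; it is quoted verbatim from \cite[Lemma~5.4]{IQS1} and used as a black box. So there is no ``paper's own proof'' of this statement to compare against. If your intent was really Lemma~\ref{lem_reg_blowup}, then your outline is close to the paper's: the paper likewise plugs the cyclic-extension construction (Lemma~\ref{lem:cyclic_gen}, which packages the Kummer piece and the Artin--Schreier--Witt piece from Lemma~\ref{lem:cyclic_p}) and Proposition~\ref{prop:cyclic_algebra} into Lemma~\ref{lem:reg_technical}. The one substantive difference is the window step: the paper does not use a ``greedy/matroid'' selection plus Schwartz--Zippel search, but instead picks any $rd\times rd$ nonsingular submatrix, deletes a misaligned block column (one whose $d$ siblings are not all present), observes the rank drops by at most $d-1$, and reapplies the rank-rounding to recover rank $rd$ in the smaller rectangular space, iterating until the index set is block-aligned. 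Your window argument, as stated, is underspecified (it is not clear what guarantees the restricted space $\cB|_{\vec i,\vec j}$ retains non-commutative rank $r$), whereas the paper's deletion-and-reround loop is self-contained.
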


\begin{proof}[Proof of Lemma~\ref{lem_reg_blowup-window}]
The statement, except the window part, readily follows by plugging 
Lemma~\ref{lem:cyclic_gen} of the previous section
to Proposition~\ref{prop:cyclic_algebra} and the using that in Lemma~\ref{lem:reg_technical}.
To see that such a window can be computed, we first observe that the lemma 
applies to $d$-blow-ups of rectangular matrices, by simple zero padding.
Second, apply the lemma and find an $rd\times rd$ nonsingular
sub-matrix of the given matrix $A$. If the column indices include some such that 
not all of
its $d-1$ siblings are included, then (1) delete the corresponding column
from the original matrix space; (2) let $A'$ be the matrix obtained by deleting 
the corresponding $d$ columns from $A$. Then $\rk(A')>\rk(A)-(d-1)$. So we apply 
the regularity lemma in the rectangular space with $A'$, to round up the rank to 
$\rk(A)$ again. Do the same for row indices. Iterate until we obtain a 
full window.
\end{proof}

\section{Proof of the main theorem}\label{sec:main}

In Section~\ref{subsec:new_algo} we prove
Theorem~\ref{thm:main}, and in Section~\ref{sec:small} we deal with the small field case. The main drawback of our earlier algorithm discussed in
Section~\ref{step3} was that the  blow-up size increases exponentially. 
However, a simple reduction procedure as described in Lemma~\ref{lem:greedy} below
readily implies that, 
once we find $A'$ of rank $r'd$ in $\rblowup{\cB}{d}$, we can efficiently reduce 
$d$ to be no 
more than $r'+1$. This means that we can always ensure that the blow-up factor is 
small, which is the key to reducing the complexity of the algorithm from 
exponential time  to 
polynomial time. We shall make the above idea rigorous in the next subsection. 

\subsection{The algorithm for the main theorem}\label{subsec:new_algo}

We first recall 
some 
preparation material from \cite{IQS1}. 

Finding an $sd$-shrunk subspace for the $\rblowup{\cB}{d}$
is equivalent to finding an $s$-shrunk subspace for $\cB$ 
because of the following simple observations (\cite[Proposition 5.2]{IQS1}). 
Firstly, for every $s$-shrunk subspace $U$ of $\F^n$ the
subspace $U\otimes \F^d$ for $\cB$
is an $sd$-shrunk
subspace for $\rblowup{\cB}{d}$. Conversely, a
$s'$-shrunk subspace for $\rblowup{\cB}{d}$ can be
embedded into a subspace of the form $U\otimes \F^d$
where $U$ is an $s$-shrunk subspace for $\cB$
with $sd\geq s'$.
%

The main technical ingredient of our algorithm is an improvement of \cite[Theorem 
5.10]{IQS1}, discussed in  Section~\ref{step2} . 
It states that either
a shrunk subspace witnessing that the (scaled-down) rank of a 
matrix in a blow-up reaches the non-commutative rank
or a matrix in a larger blow-up having 
larger scaled-down rank can be efficiently constructed. For completeness we give all the details
and also the proof even though it is identical to that in our earlier paper 
excepting for the last step.
\begin{theorem}
\label{thm-blup-incr}
Let $\cB\leq M(n, \F)$ and let $\cA=\rblowup{\cB}{d}$. Assume that
we are given a matrix $A\in \cA$ with $\rk(A)=rd$, and
$|\F|$ is $(ndd')^{\Omega(1)}$, where $d'=r+1$. There 
exists a deterministic algorithm that returns either an $(n-r)d$-shrunk 
subspace for $\cA$ (equivalently, an $(n-r)$-shrunk subspace for $\cB$), or a 
matrix $B\in \cA\otimes M(d', \F)$ of rank at least $(r+1)dd'$. Furthermore, in 
the latter case an $(r+1)\times (r+1)$ window is also found
such that the corresponding $(r+1)dd'\times (r+1)dd'$ 
sub-matrix of $B$ has full rank. This algorithm 
uses $\poly(ndd')$ arithmetic operations and, over $\Q$, all intermediate numbers 
have bit lengths polynomial in the input size.
\end{theorem}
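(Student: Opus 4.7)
The plan is to put $A$ into a normalized windowed form via the regularity lemma, then attempt to enlarge the window using a cyclic-algebra perturbation inside a $d'$-fold blow-up; failure of every such attempt will directly reveal a shrunk subspace.

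First I would apply Lemma~\ref{lem_reg_blowup} to $A$ (legal since $\rk A = rd > (r-1)d$) to produce $\widetilde A \in \cA$ together with an $r\times r$ window $W$ of $\widetilde A$ of rank $rd$. After permuting outer indices of $\cB$, I may assume this window lies at block positions $[r]\times[r]$, so $\widetilde A$ has a nonsingular upper-left $rd\times rd$ corner $W$. Now I would search for an $(r+1)\times(r+1)$ full-rank window in some matrix of $\cA\otimes M(d',\F)\cong \rblowup{\cB}{dd',dd'}$: I would set $\K=\F(X)$, construct a cyclic division algebra $D'$ of index $d'$ over $\K(Y^{d'})$ via Lemma~\ref{lem:cyclic_gen} and Proposition~\ref{prop:cyclic_algebra}, and for each pair $(i,j)\notin[r]\times[r]$ look for a perturbation $C\in\cA$ whose ``$D'$-dressed'' contribution at block $(i,j)$ makes the Schur complement of the $(r+1)\times(r+1)$ window---taken with respect to $W$---full rank inside $M(dd',\K(Y^{d'}))$. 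This is exactly the mechanism of Lemma~\ref{lem:reg_technical}: the division algebra supplies perturbations of maximal rank so that a single $(i,j)$ block from $\cB$, tensored against $D'$, can account for the full $d$-sized rank jump. If such a $C$ is found, a suitable specialization of $X$ and $Y$ to elements of $\F$ (whose existence is guaranteed by $|\F|=(ndd')^{\Omega(1)}$) together with Lemma~\ref{lem:red_data} brings the coefficients into a pre-chosen subset of $\F$ and yields the required $B\in\cA\otimes M(d',\F)$ of rank $(r+1)dd'$ together with its certifying window.

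If no augmenting perturbation exists for any $(i,j)$, then the vanishing of all the relevant Schur complements encodes linear constraints on $\cB$ saying that every element of $\cB$ carries a certain $(n-r)$-dimensional subspace of $\F^n$ into a subspace of dimension at most $r$. That is an $(n-r)$-shrunk subspace for $\cB$, which by the lifting observation recalled at the start of Section~\ref{sec:main} corresponds to the desired $(n-r)d$-shrunk subspace for $\cA$. The main obstacle I foresee is this final shrunk-subspace extraction step: pinning down exactly how the algorithmic failure of the cyclic-algebra perturbation yields clean linear data over $\F$ defining an honest $(n-r)$-shrunk subspace---one cannot just take a naive Schur complement inside $\cA$, since $W^{-1}$ is not block-$d$-compatible and hence does not lie in $\cA$, so the shrunk subspace must be read off at the outer (size $n$) level rather than at the blown-up level. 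The bit-complexity and running-time bounds over $\Q$ then follow directly from the explicit polynomial-size guarantees in Lemma~\ref{lem_reg_blowup}, Lemma~\ref{lem:cyclic_gen}, Proposition~\ref{prop:cyclic_algebra}, and Lemma~\ref{lem:red_data}.
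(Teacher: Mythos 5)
There is a genuine gap, and it sits exactly where you flagged your own unease: the claim that ``failure of every such attempt will directly reveal a shrunk subspace.'' Your augmentation mechanism is a one-step test: for each block position $(i,j)$ outside the window you look for a single perturbation $C\in\cA$ whose $D'$-dressed contribution makes one Schur complement nonsingular. The failure of all such single-step, single-block perturbations only gives you a first-order containment of the type $\cA(\ker A)\subseteq \im A$; it does not certify that the non-commutative rank is $r$, and no $(n-r)$-shrunk subspace can in general be read off from it. This is the matrix-space analogue of the fact that a matching which cannot be enlarged by adding one edge need not be maximum: rank increments may require a long ``augmenting'' chain $B_{i_1},\dots,B_{i_l}$ of elements of $\cA$, interleaved with preimages under $A$. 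The actual engine of \cite[Theorem 5.10]{IQS1} is the iterated \emph{second Wong sequence} $W_0=0$, $W_{t+1}=\cA(A^{-1}(W_t))$: only if the whole sequence stays inside $\im A$ does one obtain the shrunk subspace (namely $A^{-1}(W^\ast)$, whose dimension count gives the $(n-r)d$-shrinkage --- note also that your bookkeeping ``an $(n-r)$-dimensional subspace mapped into dimension at most $r$'' is not an $(n-r)$-shrunk subspace in the sense of the Fortin--Reutenauer characterization); and if the sequence escapes $\im A$ at some step $l>1$, the rank increment in $\cA\otimes M(d',\F)$ is built by combining all $l$ elements along the chain with division-algebra coefficients (this is where $d'>r$ and Proposition~\ref{prop:cyclic_algebra} enter, to prevent cancellation), after which the regularity rounding of Lemma~\ref{lem_reg_blowup} lifts the rank to $(r+1)dd'$ and supplies the window. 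Your proposal omits this iteration entirely, so the dichotomy you need (augment or exhibit a shrunk subspace) does not close.

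A separate remark on scope: the present paper does not reprove this theorem at all. It imports \cite[Theorem 5.10]{IQS1} and only argues two incremental points --- that the final call can now be to the unconditional regularity lemma (Lemma~\ref{lem_reg_blowup}), which removes the hypothesis that $\fdchar(\F)\nmid d$, and that this same call yields the explicit $(r+1)\times(r+1)$ full-rank window. Your use of Lemma~\ref{lem_reg_blowup}, Lemma~\ref{lem:cyclic_gen}, Proposition~\ref{prop:cyclic_algebra} and Lemma~\ref{lem:red_data} for the rounding, specialization and coefficient-reduction steps is consistent with that, but the core case distinction must be driven by the Wong-sequence computation from \cite{IQS1}, not by local Schur-complement tests.
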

\begin{proof}
Starting with the kernel $V_0$ of the linear map $A$ we compute the image $W_1$ of $V_0$ under $\cA$.
If $W_1$ is not in the image of $A$ we stop and declare $W^{*}=W_1$. Otherwise we define $V_1$ to be the preimage of $W_1$ under $A$ and define 
$W_2$ to be the image of $V_1$ under $\cA$.  We continue doing so, at each step checking if $W_i$ is in the image of
$A$ or not. Since at each step the dimension of $W_i$ increases by $d$ it is clear that we halt in $l$ steps with $l$ at most $r+1$, obtaining the limit subspace $W^{*}=W_l$. If $W^{*}$ is in the image of $A$,
 it follows from Fact~\ref{fact:Wong} that the preimage of $W_l$ under $A$ is an 
 $(n-r)d$-shrunk subspace. In either case in at most $r+1$ steps we find a shrunk 
 subspace or find that $W^{*}$ is not in
the image of $A$.

When the limit subspace is not in $\im(A)$ we proceed as follows. Let $B_l$ be an element of $\cA$ and $v_l \in V_{l-1}$ such that $B_l(v_l) \not \in \im(A)$. Then find matrices $B_{l-1} \in \cA$ and vector $v_{l-1} \in V_{l-2}$ such
that $B_{l-1}(v_{l-1}) = A(v_l)$. Walking backwards, we find matrices $B_{l-2},\ldots, B_1$ and vectors $v_{l-3}, \ldots, v_1$, $v_i \in V_{i-1}$ such that $A(v_i)=B_{i-1}(v_{i-1})$. In particular $v_1 \in \ker(A)$.

Now let $A' = A \otimes I_{d'}$. Clearly $A'$ is a matrix of rank $rdd'$ in $\cA^{d'}=\cB^{dd'}$.  Now let $E_{i, j}$ be the elementary matrix in $M(d',\F)$ with the $(i,j)$th entry being $1$ and
others $0$. Put $\widehat{B}=B_1\otimes E_{1,2}+B_2\otimes E_{2,3}+\ldots                                   
B_{l-1}\otimes E_{l-1,l}+B_l\otimes E_{l,1} \in \rblowup{\cB}{dd'}$.  If the rank of $\widehat{B} $ is more than $rdd'$ we set $A''$ to be $\widehat{B}$.  Otherwise consider the vectors 
$w_1=v_1 \otimes u_1$,  $w_2=v_2 \otimes u_2$,  $\ldots$ , $w_l=v_l \otimes u_l$. It is clear that 
 $A'(w_1)=0$ and that $A'w_{j} = \widehat{B}(w_{j-1})$ for $2 \leq j \leq l$. Furthermore, $\widehat{B}(w_l) = B_l(v_l) \otimes u_{l+1}$ and this is not in $A'(\F^{nd} \otimes \F^{d'})$ since $B_l(v_l)$ is not in the image of $A$.
So if we were to compute the second Wong sequence starting with the matrix $A'$ in the rank two linear space of $\cB^{dd'}$ spanned by matrices $\{A', \widehat{B}\}$,  the second Wong sequence 
runs out of the image of $A'$. So by Fact~\ref{fact:dimtwo} $A'$ is not of maximal rank in the linear space spanned by $\{A', \widehat{B}\}$.
So there exists $\mu \in \F$ such that 
$A' + \mu \widehat{B}$ has rank strictly bigger than $rdd'$.
As the determinant
of an $(rdd'+1)\times (rdd'+1)$ 
submatrix of $A' + \mu \widehat{B}$ is
a polynomial of degree at most $rdd'+1$ in $\mu$,  
we can find $\mu$ by 
running over all of elements of a subset of
$\F$ of size $rdd'+2$
till we find one.

We then invoke  Lemma~\ref{lem_reg_blowup-window} with $A''$ to obtain a matrix 
$B$ over the base field $\F$ of rank $(r+1)dd'$ and the $(r+1) \times (r+1)$ 
window as required, completing the proof.

It is clear that the matrices $B_1,\ldots,B_l$ as well as $\mu$
can be determined in the given 
polynomial time.
\end{proof}
To obtain the algorithm for Theorem~\ref{thm:main}, the regularity lemma needs to 
be accompanied with a reduction procedure that keeps the blow-up parameter small. 
We mentioned in the introduction that there are two
methods for this purpose, and in this section we use our method. 
The method based on the Derksen-Makam technique is presented in 
Section~\ref{sec:dm}. 
\begin{lemma}\label{lem:greedy}
Let $\cB\leq M(n, \F)$, and $d>n+1$. Assume we are given a matrix $A\in 
\cB^{\{d\}}$ of rank $dn$. Then there exists a deterministic polynomial-time 
procedure 
that 
constructs $A'\in \cB^{\{d-1\}}$ of rank $(d-1)n$. 
\end{lemma}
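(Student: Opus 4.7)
The plan is to first greedily truncate $A$ into $\cB^{\{d-1,d-1\}}$, accept whatever rank drop occurs, and then invoke the regularity lemma (Lemma~\ref{lem_reg_blowup}) to round the rank back up to the target $n(d-1)$.

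Concretely, I would pick any indices $r, c \in [d]$ (say $r = c = d$), let $P_0 \in M(d-1, d, \F)$ be the linear map that deletes the $r$-th coordinate and $Q_0 \in M(d, d-1, \F)$ be the inclusion that skips the $c$-th coordinate, and form
\[
A'' \;:=\; (I_n \otimes P_0)\, A \,(I_n \otimes Q_0) \;\in\; \cB^{\{d-1,d-1\}}.
\]
This is simply $A$ with the $r$-th row and $c$-th column of every $d \times d$ block deleted, so membership in the smaller blow-up is automatic. Because $A$ has rank $nd$, its rows are linearly independent; deleting $n$ of them drops the rank by exactly $n$, and deleting $n$ further columns drops it by at most $n$ more. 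Hence
\[
\rk A'' \;\geq\; nd - 2n \;=\; n(d-2).
\]
The hypothesis $d > n+1$ rearranges verbatim to $n(d-2) > (n-1)(d-1)$, so $\rk A'' > (n-1)(d-1)$.

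This last inequality is exactly the hypothesis needed to invoke Lemma~\ref{lem_reg_blowup} on $A''$ inside the blow-up $\cB^{\{d-1,d-1\}}$ with target scaled rank $n$; the lemma will produce some $A' \in \cB^{\{d-1,d-1\}}$ admitting an $n \times n$ window of rank $n(d-1)$. But $\cB \leq M(n,\F)$, so an $n \times n$ window must select $n$ out of the $n$ row-block-indices and $n$ out of the $n$ column-block-indices, i.e., the window is all of $A'$. Hence $\rk A' = n(d-1)$, as required, and the procedure runs in polynomial time (with polynomially bounded bit lengths over $\Q$) inherited from Lemma~\ref{lem_reg_blowup}. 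I do not foresee a serious obstacle here: the crux is just to notice that the greedy rank loss of $2n$ aligns with the regularity lemma's threshold precisely when $d > n+1$, which is exactly the hypothesis at hand.
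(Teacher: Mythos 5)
Your proposal is correct and follows essentially the same route as the paper: delete one row and one column of each block factor to land in $\cB^{\{d-1,d-1\}}$, bound the rank loss by $2n$ so that $d>n+1$ gives rank exceeding $(n-1)(d-1)$, and then apply Lemma~\ref{lem_reg_blowup} to round the rank up to $n(d-1)$. The only (cosmetic) difference is that you argue the rank bound directly while the paper phrases it as a short contradiction.
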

\begin{proof}
Let $A''$ be an appropriate $(d-1)n\times (d-1)n$ sub-matrix of $A$ corresponding 
to a 
matrix in $\cB^{\{d-1\}}$. We claim $A''$ is of rank $>(d-1)(n-1)$. Suppose 
not, as $A$ is obtained from $A''$ from adding $n$ rows and then $n$ columns, and 
$d>n+1$, we have $\rk(A)\leq \rk(A'')+2n\leq dn-d-n+1+2n < dn$, a contradiction. 
Now that $\rk(A'')>(d-1)(n-1)$, using Lemma~\ref{lem_reg_blowup-window}, we obtain 
$A'\leq \cB^{\{d-1\}}$ of rank $(d-1)n$. 
\end{proof}

\begin{proof}[Proof of Theorem~\ref{thm:main}]
Let $B_1,\ldots,B_m$ be the input basis for $\cB$.
The algorithm is an iteration based on 
Theorem~\ref{thm-blup-incr}. In each round we start with a matrix $A=\sum_i 
B_i\otimes T_i\in \rblowup{\cB}{d}$
of rank $rd$  for some integer $d\leq r+1$. 
In the first round, $d=1$ and $A$ can be taken as any matrix in $\cB$.
The procedure behind 
Theorem~\ref{thm-blup-incr} either returns an
$(n-r)$-shrunk subspace (in which case we are done),
or a new matrix (denoted also by $A$)
in a blow-up $\rblowup{\cB}{d'}$ of rank $\geq (r+1)d'$
for some $d'\leq (r+1)^2$, together with a square window of size
$r+1$ so that the corresponding sub-matrix of $A$ 
is of rank $(r+1)d'$. 
If 
$d'>r+2$
we apply Lemma~\ref{lem:greedy} as follows. The $n$ in the statement of 
Lemma~\ref{lem:greedy} will be $r+1$, 
and we use it repeatedly to get a matrix in the $(r+2)$-blow-up, the 
main content
of which consists of 
$(r+2)\times(r+2)$
matrices $T_1',\ldots,T_m'$ such that the corresponding
$(r+1)(r+2)\times (r+1)(r+2)$
sub-matrix of $A'=\sum_i B_i\otimes T_i'$ has full rank. 
Then we replace $A$ with $A'$ and apply the size reduction
procedure in Lemma~\ref{lem:red_data} to arrange that the entries 
of $T_i$ fall into the prescribed subset of $\F$,
and continue the iteration with this new matrix $A$.
\end{proof}

\subsection{Proof of Corollary~\ref{cor:small}: the case of small finite 
fields}\label{sec:small}

We only need to prove Corollary~\ref{cor:small} (2), 
from which (1) and (3) are 
immediate.

Given a matrix space $\cB\leq M(n, \F)$ and a field extension $\K/\F$, $\cB$ can 
be viewed naturally as a matrix space in $M(n, \K)$. 
For convenience we use $\ncrk_\F(\cB)$ to signal that we consider the 
non-commutative rank of $\cB$ over $\F$. 
We first observe that the 
non-commutative 
rank does not change under field 
extensions. This is classical, and can be seen from the perspective of the second 
Wong sequences (see e.g. \cite[Section 2]{conf_version}). Note that 
the commutative rank may get larger if we go to an 
extension field from a too-small field. 
\begin{lemma}\label{lem:fd_ext}
Given $\cB\leq M(n, \F)$ and a field extension $\K/\F$, we have 
$\ncrk_\F(\cB)=\ncrk_\K(\cB)$.
\end{lemma}

Suppose $\cB\leq M(n, \F)$ is given by a linear basis $\{B_1, \dots, B_m\}$. Let 
$\K/\F$ be a field extension of degree $g$ so that 
$|\K|=n^{\Omega(1)}$ satisfies the field size condition of Theorem~\ref{thm:main}. 
Note that $g\leq O(\log_{|\F|} n)$.
Viewing $\cB$ as a matrix space over $\K$,  we apply Theorem~\ref{thm:main} to
compute $\ncrk_\K(\cB)$, which is equal to $r=\ncrk_\F(\cB)$ by 
Lemma~\ref{lem:fd_ext}. We also obtain the following: (1) $A_1, \dots, 
A_m\in M(d, 
\K)$ such that $A=\sum_{i\in[m]} A_i\otimes B_i$ is of rank $rd$, and (2) 
$U\leq \K^n$ such that $U$ is a shrunk subspace of $\cB$ a matrix space in $M(n, 
\K)$. We fix an embedding $\phi$ of $\K$ into $M(g, \F)$ using the regular 
representation. For $i\in[m]$, construct $\widetilde{A_i}\in M(gd, \F)$ by 
replacing each entry $\alpha$ of $A_i$ with $\phi(\alpha)$, and form 
$\widetilde{A}=\sum_{i\in[m]} \widetilde{A_i}\otimes B_i$. Note that 
$\widetilde{A}$ is in $M(gd, \F)\otimes \cB$, and it can be seen easily that 
$\rk(\widetilde{A})=g\cdot \rk(A)$. Since $\rk(\widetilde{A})/gd=r=\ncrk_\F(\cB)$, 
we 
have $\crk_\F(M(gd, \F)\otimes \cB)=\ncrk_\F(M(gd, \F)\otimes \cB)$. This implies 
that 
we can apply the second Wong sequence to $(\widetilde{A}, M(gd, \F)\otimes \cB)$ 
to obtain an $(n-r)gd$-shrunk subspace of $M(gd, \F)\otimes \cB$ which then 
induces an $(n-r)$-shrunk subspace of $\cB$.

\section{Constructivizing the result of Derksen and Makam}\label{sec:dm}

Here is an algorithmic version of Lemma 2.7 of \cite{DM2}. Although the most relevant blow-ups 
in the context of the non-commutative rank problem are 
square (e.g,  of the form $\rblowup{\cB}{k}$, described earlier), non-square blow-ups turned out
to be crucial in the reduction techniques in \cite{DM2}.
So we use a different notation for blow-ups from what was used so far. Given a matrix space $\cB\leq M(n,\F)$, its {\em $(k, \ell)$-blow-up} 
$\rblowup{\cB}{k,\ell}$
is defined as the matrix space $\cB\otimes M(k\times \ell,\F)$
in $M(nk\times n\ell,\F)$. 

\begin{lemma}
\label{lem:two.seven}
Let $\cB\leq M(n, \F)$. 
Assume that for $k,\ell=1,\ldots,N$ we are given
matrices $M_0(k,\ell)\in \rblowup{\cB}{k,\ell}$ of rank $r_0(k,\ell)$, 
and suppose that 
$|\F|\geq 2nN+1$.
Then for every $k,\ell=0,\ldots,N$ we can efficiently 
(that is, by an algorithm that
uses $\poly(Nn)$ arithmetic operations and, over e.g.~$\Q$,
produces intermediate and final data of size polynomial in the
input size) construct
matrices $M(k,\ell)\in \rblowup{\cB}{k,\ell}$ of rank 
$r(k,\ell)\geq r_0(k,\ell)$ such
that
\begin{enumerate}
\item[(1)] $r(k,\ell+1)\geq r(k,\ell)$ ($0\leq \ell<N$);
\item[(2)] $r(k+1,\ell)\geq r(k,\ell)$ ($0\leq k<N$);
\item[(3)] $r(k,\ell+1)\geq \frac{1}{2}(r(k,\ell)+r(k,\ell+2)$ ($0\leq \ell<N-1$);
\item[(4)] $r(k+1,\ell)\geq \frac{1}{2}(r(k,\ell)+r(k+2,\ell)$ ($0\leq k<N-1$);
\item[(5)] 
$r(k,k)$ is divisible by $k$.
\end{enumerate}
For $k=0$ (resp. $\ell=0$) we assume that $M_0(k,\ell)$ is 
the empty matrix having $\ell$ 
columns (resp. $k$ rows), and $r(k, \ell)=0$.
\end{lemma}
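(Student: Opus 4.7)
The plan is a monotone iterative repair procedure. I would initialize $M(k,\ell):=M_0(k,\ell)$ and $r(k,\ell):=r_0(k,\ell)$ for all $k,\ell\geq 1$, and take the boundary matrices $M(k,0)$, $M(0,\ell)$ as prescribed. Then I would repeatedly scan the five conditions, and whenever some condition fails at an index $(k,\ell)$ apply a local repair that strictly increases $r(k,\ell)$ while leaving every other rank weakly increasing. Since the potential $\Phi:=\sum_{k,\ell}r(k,\ell)$ is bounded above by $nN^3$, the process terminates after $\poly(N,n)$ rounds, each costing $\poly(Nn)$ arithmetic operations, so the overall cost is polynomial.

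Conditions (1) and (2) are immediate to repair: if $r(k,\ell+1)<r(k,\ell)$, replace $M(k,\ell+1)$ by $M(k,\ell)$ padded with an additional column block of zeros; this lies in $\rblowup{\cB}{k,\ell+1}$ and has rank $r(k,\ell)$, and the $k$-direction is symmetric. Condition (5) is handled by the full regularity lemma of Section~\ref{sec:reg}: if $r(k,k)=(q-1)k+s$ with $1\leq s<k$, then $\rk M(k,k)>(q-1)k$, so Lemma~\ref{lem_reg_blowup} applied inside $\rblowup{\cB}{k,k}$ yields a matrix of rank $qk$, strictly increasing $r(k,k)$.

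Conditions (3) and (4) form the core of the argument and require a constructive version of the concavity step in \cite[Lemma 2.7]{DM2}. Given $A:=M(k,\ell)$ of rank $r_1$ and $B:=M(k,\ell+2)$ of rank $r_2$, the Derksen--Makam argument guarantees the existence of $C\in\rblowup{\cB}{k,\ell+1}$ of rank at least $\lceil(r_1+r_2)/2\rceil$. To make this effective, I would extract from their proof the parametric family of matrices in $\rblowup{\cB}{k,\ell+1}$ underlying the existence claim --- the natural candidate being a linear combination with formal parameters $t_0,\ldots,t_{\ell+2}$ of the zero-padded $A$ and the $\ell+2$ column-block-deleted contractions of $B$ --- and reinvoke their argument to exhibit an $m\times m$ minor of this family, with $m=\lceil(r_1+r_2)/2\rceil$, whose determinant is a nonzero polynomial in the $t_j$. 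The hypothesis $|\F|\geq 2nN+1$ together with the Schwartz--Zippel lemma then produces explicit scalar values of $t$ realizing this rank, and Lemma~\ref{lem:red_data} reduces the resulting entry sizes.

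I expect the main obstacle to be this last step: isolating from the existential Derksen--Makam argument the explicit parametric family whose rank polynomial provably attains the concavity bound $\lceil(r_1+r_2)/2\rceil$. Once that family is identified, the remaining ingredients --- monotonicity repairs, divisibility via the regularity lemma, Schwartz--Zippel derandomization, size reduction, and the potential-function termination argument --- all go through with no further creative input.
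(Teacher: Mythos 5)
Your overall architecture coincides with the paper's: initialize with $M_0$, repair violations locally, fix (1)--(2) by zero-padding, fix (5) by the regularity lemma (Lemma~\ref{lem_reg_blowup}), terminate via the potential $\sum_{k,\ell}r(k,\ell)\le nN(N+1)^2$, and control bit sizes with the data-reduction procedure. But the step you yourself flag as ``the main obstacle'' is precisely the content of the lemma, and you leave it unproved: you never show that your multi-parameter family contains a matrix of rank $\lceil(r_1+r_2)/2\rceil$ in the middle blow-up, nor do you exhibit (or prove the existence of) the $m\times m$ minor whose nonvanishing the Schwartz--Zippel step is supposed to test. Without that existence argument there is nothing to derandomize, so as written the proof of the concavity repairs (3)--(4) is a genuine gap, not a routine verification.

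The paper closes exactly this gap with a short explicit argument. When (4) is violated at $(k,\ell)$ (condition (3) is the transposed case), form the \emph{one-parameter} pencil $A(t)=M(k+2,\ell)+t\,M(k,\ell)^{++}$, where $^{++}$ appends two zero row blocks. The conditions ``$\rk A(t)\ge r(k+2,\ell)$'' and ``the first $kn$ rows of $A(t)$ have rank $\ge r(k,\ell)$'' are each witnessed by the nonvanishing of a single minor, a nonzero polynomial of degree at most $nN$ in $t$; hence some $t$ in any prescribed set of size $2nN+1$ satisfies both. If $\rk A(t)>r(k+2,\ell)$, replace $M(k+2,\ell)$ by $A(t)$ (note the repair need not improve the \emph{middle} entry; any strict rank increase suffices for the potential argument). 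Otherwise let $U$, $V$, $W$ be the spans of the first $kn$ rows, the first $(k+1)n$ rows, and the first $kn$ together with the last $n$ rows of $A(t)$. Since $U\le V\cap W$ and $V+W$ is the row space of $A(t)$, one gets $\dim V+\dim W\ge r(k,\ell)+r(k+2,\ell)$, so a violation of (4) forces one of the two row-block contractions $A_1,A_2\in\rblowup{\cB}{k+1,\ell}$ to have rank strictly exceeding $r(k+1,\ell)$, and that contraction replaces $M(k+1,\ell)$. This dimension count is the missing idea in your proposal; your candidate family does in fact contain these witnesses (each relevant contraction of the pencil has the form ``contraction of $M(k+2,\ell)$ plus $t\,M(k,\ell)^{+}$''), so your route is salvageable, but only once you supply this argument --- and note that a single parameter $t$ suffices, making the field bound $|\F|\ge 2nN+1$ immediate without any multivariate Schwartz--Zippel search.
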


\begin{proof}
Initially put $M(k,\ell)=M_0(k,\ell)$ for every pair $(k,\ell)$.
For a $k\times \ell$ matrix $T$ let $T^{+}$ denote the $(k+1)\times \ell$
matrix obtained form $T$ by appending a zero ($(k+1)$st) row, 
$T^{++}$ is obtained by appending two zero rows. 
For $M=\sum_{i=1}^m B_i\otimes T_i$ we use $M^{+}$ for
$\sum_{i=1}^m B_i\otimes T_i^{+}$, while
$M^{++}=\sum_{i=1}^m B_i\otimes T_i^{++}$.

Let $(k,\ell)$ be a pair
such that any of (1)--(5) is violated. Then we will
replace some of the matrices $M(k',\ell')$ with
matrices having larger rank. Over an infinite base field like $\Q$,
each  such replacement step (or each small group consisting 
of a few them) can be followed by an application of the data reduction
procedure in Lemma~\ref{lem:red_data}  to keep intermediate (as well as the final)
data small.

If (1) is violated
then, like in \cite{DM2}, replace $M(k+1,\ell)$ with $M(k,\ell)^{+}$.
We can treat a violation of (2) symmetrically. 

When (3) is violated we consider the matrix
$A=A(t)=M(k+2,\ell)+tM(k,\ell)^{++}$ as a $(k+2)\times \ell$ block matrix
consisting of square blocks of size $n$ from $\cB$. We can
choose $t$ from any subset $S$ of size $2nN+1$ of the base
field so that $A$ has rank at least $r(k+2,\ell)$,
while the first $kn$ rows form a matrix of rank at least
$r(k,\ell)$. This is because
a necessary condition for violating either of these two conditions
is that the determinant of an appropriate (but unknown) 
sub-matrix vanishes which determinant is, as a polynomial
of degree at most $nN$ in $t$ is not identically zero.
The product of these polynomials has degree at most $2nN$
therefore it cannot have more that $2nN$ zeros. 

If $A$ has rank larger than $r(k+2,\ell)$ then
we replace $M(k+2,\ell)$ with $A$. Otherwise,
like in \cite{DM2}, let $U$ be the span of the first 
$kn$ rows of $A$, $V$ be the span of the first $(k+1)n$ rows and
$W$ be the span of the first $kn$ rows and the last $n$ rows.
Note that these collections rows correspond to matrices
of the form $A_0=\sum B_i\otimes T_i$, $A_1=\sum B_i\otimes T_i'$
and $A_2=\sum B_i\otimes T_i''$ where $T_i$ are $k\times \ell$
matrices, while $T_i'$ and $T_i''$ have  $(k+1)$ rows and $\ell$ columns. 
As $U\leq V\cap W$ and the row
space of $A$ is $V+W$, we have 
$r(k,\ell)\leq \dim U\leq \dim (V\cap W)=\dim V+\dim W-\dim V+W=
\dim V+\dim W-r(k+2,\ell)$. It follows that $\dim V+\dim W\geq
r(k,\ell)+r(k+2,\ell)$, whence violation of (3) is only possible
if either $\dim V$ or $\dim W$ is strictly larger than
$\frac{1}{2}(r(k,\ell)+r(k+2,\ell))$. Then we replace $M(k+1,\ell)$ with
$A_1$ or $A_2$, according to which one has larger rank.
A violation of $(4)$ is treated symmetrically.

When (5) is violated then we can apply~\ref{lem_reg_blowup-window}.

As in each round when violation of (1),$\ldots$,(4) or (5)
 occurs the rank of at least one 
of the matrices $M(k,\ell)$ is incremented, the total number of  
rounds for achieving (1)--(5) is at most $N^3n$.
\end{proof}

And here is essentially Proposition 2.10 of \cite{DM2}. 
We include a proof (which is almost literally the same as
the proof in \cite{DM2})
here for completeness. We note that this lemma deals only with the property of 
certain families of functions, without referring to matrices.

\begin{lemma}[{\cite[Proposition 2.10]{DM2}}]
\label{lem:two.ten}
Assume that $N>n>0$, $r:\{0, 1, \ldots, N\}^2\rightarrow \Z$ is a function
with $0\leq r(k,\ell)\leq \min(k,\ell)n$ for $k, \ell \in \{0, 1, \dots, N\}$
also satisfying (1)--(5) of
~\ref{lem:two.seven}. Suppose further that $r(1,1)>1$, 
and there exists
$d$ such that $n\leq d+1\leq N$ and $r(d+1,d+1)=n(d+1)$.
Then, 
$r(d,d)=nd$ as well.
\end{lemma}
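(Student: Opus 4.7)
My plan is to use the monotonicity and concavity properties (1)--(4) to propagate the boundary condition $r(d+1,d+1)=n(d+1)$ inward to the diagonal entry $r(d,d)$, and then invoke divisibility (5) together with the hypothesis $r(1,1)>1$ to pin down the exact value.

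The first step is an \emph{extremal concavity} observation that fully determines the last row and last column. By (1) and (3), the successive differences $\Delta_\ell := r(d+1,\ell)-r(d+1,\ell-1)$ are nonnegative and nonincreasing in $\ell$; the bound $r(d+1,\ell)\le \ell n$ for $\ell\le d+1$, combined with this non-increasingness, forces $\Delta_\ell\le n$ for every $\ell\le d+1$; and since they sum to $r(d+1,d+1)=n(d+1)=(d+1)\cdot n$, every $\Delta_\ell$ must equal $n$. Hence $r(d+1,\ell)=\ell n$ for $0\le\ell\le d+1$, and by a symmetric argument $r(k,d+1)=kn$ for $0\le k\le d+1$. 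Now apply discrete concavity of $\ell\mapsto r(d,\ell)$ on $\{0,\ldots,d+1\}$, whose boundary values are $r(d,0)=0$ and $r(d,d+1)=dn$: the function lies above its chord, so $r(d,d)\ge d^2n/(d+1)$. Combined with $r(d,d)\le dn$ and $d\mid r(d,d)$ from (5), the defect $dn-r(d,d)$ is a nonnegative multiple of $d$ of size at most $dn/(d+1)$.

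When $n\le d$, the bound $dn/(d+1)<d$ already forces the defect to be zero and we are done. I expect the main obstacle to be the borderline case $d+1=n$, where $dn/(d+1)=d$ and divisibility permits a second admissible value $r(d,d)=d^2$. To rule this out I would invoke $r(1,1)>1$: assume for contradiction that $r(d,d)=d^2$. Then the same extremal-difference analysis applied to the row $k=d$ (with $r(d,0)=0$ and $r(d,d+1)=dn=d(d+1)$) shows that the final difference $r(d,d+1)-r(d,d)=d$ is at most every earlier difference by non-increasingness, while the differences over $\ell=1,\ldots,d$ sum to $d^2$; with $d$ terms each $\ge d$, all must equal $d$, so $r(d,\ell)=\ell d$ for all $\ell\le d+1$. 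Symmetrically $r(k,d)=kd$ for all $k\le d+1$, so $r(1,d)=d$. Now repeat the extremal argument one more time on the row $k=1$ with boundary values $r(1,0)=0$, $r(1,d)=d$, and $r(1,d+1)=d+1$: the final difference is $1$, hence every earlier difference is at least $1$, and the first $d$ of them sum to $r(1,d)=d$, forcing each to equal $1$. Thus $r(1,1)=1$, contradicting the hypothesis, and so $r(d,d)=dn$ in this case as well.
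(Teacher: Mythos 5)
Your proof is correct, and while its first step coincides with the paper's (forcing $r(d+1,\ell)=\ell n$ and $r(k,d+1)=kn$ — you phrase it via nonincreasing differences, the paper via the chord inequality, which is the same thing), the second half follows a genuinely different route. The paper argues directly and uniformly: it uses $r(1,1)\geq 2$ together with concavity along the row $k=1$ to pin down $r(1,d)=n$, then applies concavity along the column $\ell=d$ between $k=1$ and $k=d+1$ to get $r(d,d)>(n-1)d$, and finishes with divisibility (5) — no case analysis and no contradiction. You instead take the chord of $\ell\mapsto r(d,\ell)$ from $(d,0)$ to $(d,d+1)$, which with divisibility leaves at most two admissible values for $r(d,d)$, and you observe that when $n\leq d$ the hypothesis $r(1,1)>1$ is not needed at all; only in the borderline case $d+1=n$ do you invoke it, via a backward-propagation contradiction ($r(d,d)=d^2$ forces equal differences along row $d$, then column $d$, then row $1$, giving $r(1,1)=1$). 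What your version buys is the isolation of exactly where $r(1,1)>1$ matters (only at $d=n-1$) and a statement of the generic case that uses concavity and divisibility alone; what the paper's version buys is brevity and uniformity — two chord inequalities and one divisibility step, with no case split — which is why it is the more economical write-up even though both arguments rest on the same properties (1)--(5).
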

\begin{proof}
By $r(d+1, d+1)=n(d+1)$, for $1\leq a<d+1$,
$$
r(d+1, a)\geq \frac{(d+1-1)\cdot r(d+1, 0) + a\cdot r(d+1, d+1)}{d+1}=an.
$$
As by assumption $r(d+1, a)\leq an$, we have $r(d+1, a)=an$. Similarly $r(a, 
d+1)=an$ for $1\leq a<d+1$.

Then we bound $r(1, d)$ as follows:
\begin{eqnarray*}
r(1, d) & \geq & \frac{(d-1)\cdot r(1, d+1)+1\cdot r(1, 1)}{d} \\
 & \geq & \frac{(d-1)n+2}{d}=n-\frac{n-2}{d}>n-1. 
\end{eqnarray*}
Note that we use $r(1, 1) > 1$ and $d\geq n-1$. Since $r(1, d)\in \Z$, $r(1, d)=n$.

We are ready to bound $r(d, d)$ then. 
\begin{eqnarray*}
r(d, d) & \geq & \frac{(d-1)\cdot r(d+1, d)+1\cdot r(1, d)}{d} \\
 & = & \frac{(d-1)dn+n}{d} = nd-n+\frac{n}{d}.
\end{eqnarray*}
{}From $d\geq n-1$ it is inferred easily that $-n+\frac{n}{d}>-d$. Therefore 
$nd-n+\frac{n}{d}>(n-1)d$. By (5) we conclude that $r(d, d)=nd$.
\end{proof}

We finally remark that, if we use Lemma~\ref{lem:two.seven} in the proof of 
Theorem~\ref{thm:main}, then 
$n$ in the statement of the lemma will be $r+1$, $N$ will be 
$d'$, $M_0(d',d')$ is the nonsingular $(r+1)d'\times (r+1)d'$
block of $A$ and $M_0(p,q)$ can be actually even the zero matrix
for $(p,q)\neq (d',d')$.
It will prepare matrices in several not necessarily square blow-ups, 
among others, most importantly, one in an 
$(r,r)$-blow-up with a similar content as described in the proof of 
Theorem~\ref{thm:main}.

\paragraph{Acknowledgements.}
We would like to thank the authors
of \cite{GGOW} and of \cite{DM2} for sharing their ideas with us
and making us possible to read early versions of their manuscripts.
Part of the work was done when G\'abor and Youming 
were visiting the Centre for Quantum Technologies at
the National University of Singapore.
Research of the first author was also supported in part by the
Hungarian National Research, Development and Innovation Office – NKFIH
Grant 115288.
Youming's 
research was supported by the Australian Research Council DECRA DE150100720. 
KV's research was supported by a grant from the Infosys foundation.


\end{document}